\documentclass[12pt]{article}
\usepackage[english]{babel}
\usepackage{amsthm}
\usepackage{amsfonts}
\usepackage{amsmath,amssymb,fullpage,graphicx,natbib,color}
\let\hat\widehat
\let\tilde\widetilde

\usepackage{sectsty}
\sectionfont{\large}
\subsectionfont{\small}

\usepackage[compact]{titlesec}
\titlespacing{\section}{0pt}{*2}{*0}
\titlespacing{\subsection}{0pt}{*1}{*0}
\titlespacing{\subsubsection}{0pt}{*0}{*0}

\newtheorem{theorem}{Theorem}
\newtheorem{lemma}[theorem]{Lemma}

\newtheorem{corollary}[theorem]{Corollary}
\newtheorem{proposition}[theorem]{Proposition}
\newtheorem{remark}[theorem]{Remark}
\newtheorem{definition}[theorem]{Definition}

\DeclareMathOperator*{\esssup}{ess\,sup}
\newcommand{\E}{\mathbb{E}}
\newcommand{\ind}{1{\hskip -2.5 pt}\hbox{I}}

\newenvironment{enum}{
\begin{enumerate}
  \setlength{\itemsep}{1pt}
  \setlength{\parskip}{0pt}
  \setlength{\parsep}{0pt}
}{\end{enumerate}}

\setlength{\parskip}{3pt}
\setlength{\topskip}{0pt}
\setlength{\topmargin}{0pt}
\setlength{\partopsep}{0pt}
\begin{document}

\begin{center}
{\Large\bf Distribution Free Prediction Bands}
\end{center}

\begin{center}
{Jing Lei and Larry Wasserman}\\
{Carnegie Mellon University}\\
\today
\end{center}

\begin{abstract}
We study distribution free, nonparametric prediction bands
  with a special focus on their finite sample behavior.
First we investigate and develop different notions of
finite sample
coverage guarantees.
Then we give a new prediction band estimator
by combining the idea
of ``conformal prediction'' \citep{VovkNG09}
with nonparametric conditional density estimation.
The proposed estimator, called COPS (Conformal
Optimized Prediction Set), always has finite sample guarantee
in a stronger sense than the original conformal prediction
estimator.
Under regularity conditions the estimator
converges to an oracle band at a minimax optimal rate.
A fast approximation algorithm and a data driven method for
selecting the bandwidth are developed.
The method is illustrated first in simulated data. Then, an application
shows that the proposed method gives
desirable prediction intervals in an automatic
way, as compared to the classical linear regression modeling.
\end{abstract}

\section{Introduction}

Given observations $(X_i,Y_i)\in \mathbb R^{d}\times \mathbb R^1$ for $i=1,...,n$,
we want to
predict $Y_{n+1}$ given future predictor $X_{n+1}$.
Unlike typical nonparametric regression methods, our goal is not to produce a point prediction.
Instead, we construct a prediction interval $C_n$
that contains $Y_{n+1}$ with probability at least $1-\alpha$.
More precisely,
assume that
$(X_1,Y_1),\cdots (X_{n+1},Y_{n+1})$ are iid observations from some
distribution $P$.
We construct,  from the first $n$ sample points, a set-valued function
\begin{equation}
C_n(x)\equiv C_n(X_1,Y_1,\ldots,X_n,Y_n,x)\subseteq \mathbb R^1
\end{equation}
such that the next response variable $Y_{n+1}$ falls inside
$C_n(X_{n+1})$ with a certain level of confidence.
The collection of prediction sets
$C_n=\left\{ C_n(x):\ x\in\mathbb{R}^d\right\}$
forms a
\emph{prediction band}.

The prediction set $C_n(x)$ depends on
the observed value $X_{n+1}=x$, which shall be interpreted
as the estimated set that $Y$ is likely to fall in,
given $X_{n+1}=x$.  This extends nonparametric regression
by providing a prediction set
for each $x$.  Such a prediction set
provides useful information
about the uncertainty.  The problem of
prediction intervals is well studied in the context
of linear regression, where prediction intervals
are constructed under linear and Gaussian assumptions
(see, \cite{DeGrootS12}, Theorem 11.3.6).
The Gaussian assumption can be relaxed using, for example,
quantile regression \citep{KoenkerH01}.
These linear model based methods usually have
reasonable finite sample performance. However,
the coverage is valid only when the linear (or other parametric)
regression
model is correctly specified.  On the other hand,
nonparametric methods
have the potential to work for any smooth distribution
(\cite{ruppert})
but only asymptotic results are available and
the finite sample behavior remains unclear.

Recently, \citet{VovkNG09} introduce a generic approach,
called \emph{conformal prediction},
to construct
valid, distribution free, sequential prediction sets.
When adapted to our setting, this yields
prediction bands with a
\emph{finite sample coverage guarantee}
in the sense that
\begin{equation}\label{eq::valid}
\mathbb{P}\left[Y_{n+1}\in C_n(X_{n+1})\right]\geq 1-\alpha\ \ \
{\rm for\ all\ } P,
\end{equation}
where $\mathbb{P}=P^{n+1}$
is the joint measure of
$(X_1,Y_1),\cdots (X_{n+1},Y_{n+1})$.
However,
the conditional coverage and statistical efficiency
of such bands are not investigated.

In this paper we extend
the results in \cite{VovkNG09} and study
conditional coverage as well as efficiency.
We show that although finite sample coverage defined
in (\ref{eq::valid}) is a desirable
property, this is not enough to guarantee good prediction bands.
We argue that the finite sample coverage given by (\ref{eq::valid})
should be interpreted as \emph{marginal
coverage}, which is different from (in fact, weaker than) the
\emph{conditional coverage} as usually sought in prediction
problems.  Requiring only marginal validity may
lead to unsatisfactory estimation even in very simple cases.
As a result, a good estimator must satisfy something
more than
marginal coverage.  A natural criterion would
be conditional
coverage.
However, we prove that conditional coverage
is impossible to achieve with a finite sample.
As an alternative solution, we develop a new notion, called
\emph{local
validity}, that interpolates between marginal and
conditional
validity, and is achievable with a finite sample.
This notion leads to our proposed estimator: COPS (Conformal
Optimized Prediction Set).
We also show that when the sample size goes to infinity,
under regularity conditions,
the locally valid prediction band
given by COPS can give arbitrarily
accurate conditional coverage, leading to
an asymptotic conditional coverage guarantee.

Another contribution of this paper is the study of
\emph{efficiency} in the context of nonparametric
prediction bands.  Roughly speaking, efficiency
requires a prediction band to be small while maintaining
the desired probability coverage in the sense of (\ref{eq::valid}).
We study the efficiency of our estimator by measuring
its deviation from an {\em oracle band},
the band one should use
if the joint distribution $P$ were known.
We also give a
minimax lower bound on the estimation
error so that the efficiency of our method is indeed
minimax rate optimal over a certain class of smooth
distributions.

To summarize, the method given in this paper
is the first one with both finite sample
(marginal and local) coverage,
asymptotic conditional coverage,
and an explicit rate for asymptotic efficiency.
The finite sample marginal and local validity is
distribution
free: no assumptions on $P$ are required; $P$ need not
even have a density.
Asymptotic conditional validity and efficiency are closely
related
and rely on some standard regularity conditions on the
density.
Furthermore, all tuning parameters are completely data-driven.

The problem of
constructing prediction bands resembles
that of nonparametric confidence band estimation for the
regression function $m(x)=\mathbb E(Y|X=x)$.
However, these are two different inference problems.
First note that non-trivial, distribution-free
confidence bands for the regression function
$m(x) = \mathbb{E}(Y|X=x)$
do not exist \citep{Low,GW}.
On the other hand, in this paper we show that
consistent prediction bands estimation is
possible under mild regularity conditions.
Hence there is a distinct difference between confidence bands
for the regression function
and prediction bands.

{\em Prior Work On Nonparametric Prediction Bands.}
The usual nonparametric prediction interval takes the form
\begin{equation}
\hat{m}(x) \pm z_{\alpha/2}\, \sqrt{\hat\sigma^2 + s^2}
\end{equation}
where $\hat m$ is some nonparametric regression estimator,
$\hat\sigma^2$ is an estimate of ${\sf Var}(Y|X)$,
$s$ is an estimate of the standard error of $\hat m$
and $z_{\alpha/2}$ is either a Normal quantile
or a quantile determined by bootstrapping.
See, for example, Section 6.2 of
\cite{ruppert},
Section 2.3.3 of \cite{loader}
and Chapter 5 of \cite{Fan}.
The assumption of constant variance can be relaxed; see, for example,
\cite{SJOS:SJOS254}.
Other related work includes
\cite{RSSB:RSSB308} on bootstrapping,
\cite{dav1987} on variance estimation and
\cite{car1991} on transformation approaches.
However, none of these methods yields prediction bands with
distribution free, finite sample
 validity.
Furthermore, these methods always produce a prediction set
in the form of an interval which, as we shall see,
may not be optimal.
In fact, we are not aware of any paper that
provides distribution free finite sample prediction bands
with asymptotic optimality properties as we provide in this paper.
The only paper we know of that provides
finite sample marginal validity is
the very interesting paper by \cite{VovkNG09}.
However, that paper focuses on linear predictors
and does not address efficiency or conditional validity.

{\em Outline.}
In Section \ref{section::validity}
we introduce various notions of validity
and efficiency.
In Section \ref{sec::method} we introduce our methods
for prediction bands: the COPS estimator.
We study the large sample and minimax results of the method
in Section \ref{sec:rates}.
We discuss bandwidth selection in
Section \ref{sec:bandwidth}.
Section \ref{sec::examples} contains
some examples.
Finally, concluding remarks are in Section
\ref{sec::conclusion}.

\section{Marginal, Conditional, and Local Validity}\label{sec:interpretation}
\label{section::validity}

\subsection{Marginal Validity and Prediction Sets}

Prediction bands are an extension
of nonparametric prediction sets
(also called tolerance regions).
Suppose we observe $n$ iid copies
$Z_1,\ldots, Z_n$
of a random vector $Z$ with distribution
$P$ and we want a
set $T_n\subseteq R^d$ such that
$\mathbb P\left[Z_{n+1}\in T_n\right]\ge 1-\alpha$
for all $P$.
Let $Z_i=(X_i,Y_i)$.
Since
the probability
statement in (\ref{eq::valid})
is over the joint distribution of
$(X_1,Y_1),\ldots, (X_{n+1},Y_{n+1})$,
it is equivalent to
\begin{equation}\label{eq:valid_equiv}
\mathbb P\left[(X_{n+1},Y_{n+1})\in C_n\right]
\ge 1-\alpha,~{\rm for~all~}P.
\end{equation}
That is, equation (\ref{eq:valid_equiv}) is exactly
the definition of a prediction set for the joint
distribution $(X,Y)$.
As a result, any prediction set for the joint distribution
provides a solution, with finite sample coverage,
to the prediction band problem.
In this subsection we pursue this point further.
In the following subsections we consider
improvements.

The study of
prediction sets dates back to \cite{Wilks41},
\cite{Wald43}, and \cite{Tukey47}.  More recently,
the research on prediction sets has focused
on finding statistically efficient estimators in
multivariate cases
\citep{ChatterjeeP80,DiBucchianicoEM01,LiL08}.
\cite{LeiRW11} study distribution free, finite sample valid
and efficient
estimator of
prediction sets.
A thorough introduction to
prediction sets can be found in \cite{KrishnamoorthM09}.

There are many different methods to construct prediction sets.
A common measure of efficiency  is
the Lebesgue measure and the optimal prediction set
is the one with smallest Lebesgue measure among all sets with
the desired coverage level.
It is well-known that the optimal prediction set at level
$1-\alpha$ (optimal refers to the one having smallest Lebesgue measure) is
an upper level set of the joint density:
\begin{equation}
\label{eq:oracle_toler}
C^{(\alpha)}=\Bigl\{(x,y):p(x,y)\ge t^{(\alpha)}\Bigr\},
\end{equation}
where $t^{(\alpha)}$ is chosen such hat
$P(C^{(\alpha)})=1-\alpha$.
As illustrated in
the following example, an optimal joint prediction
can lead to an unsatisfactory prediction band.

\begin{figure}
\begin{center}
\includegraphics[height = 6 cm, width=14cm]{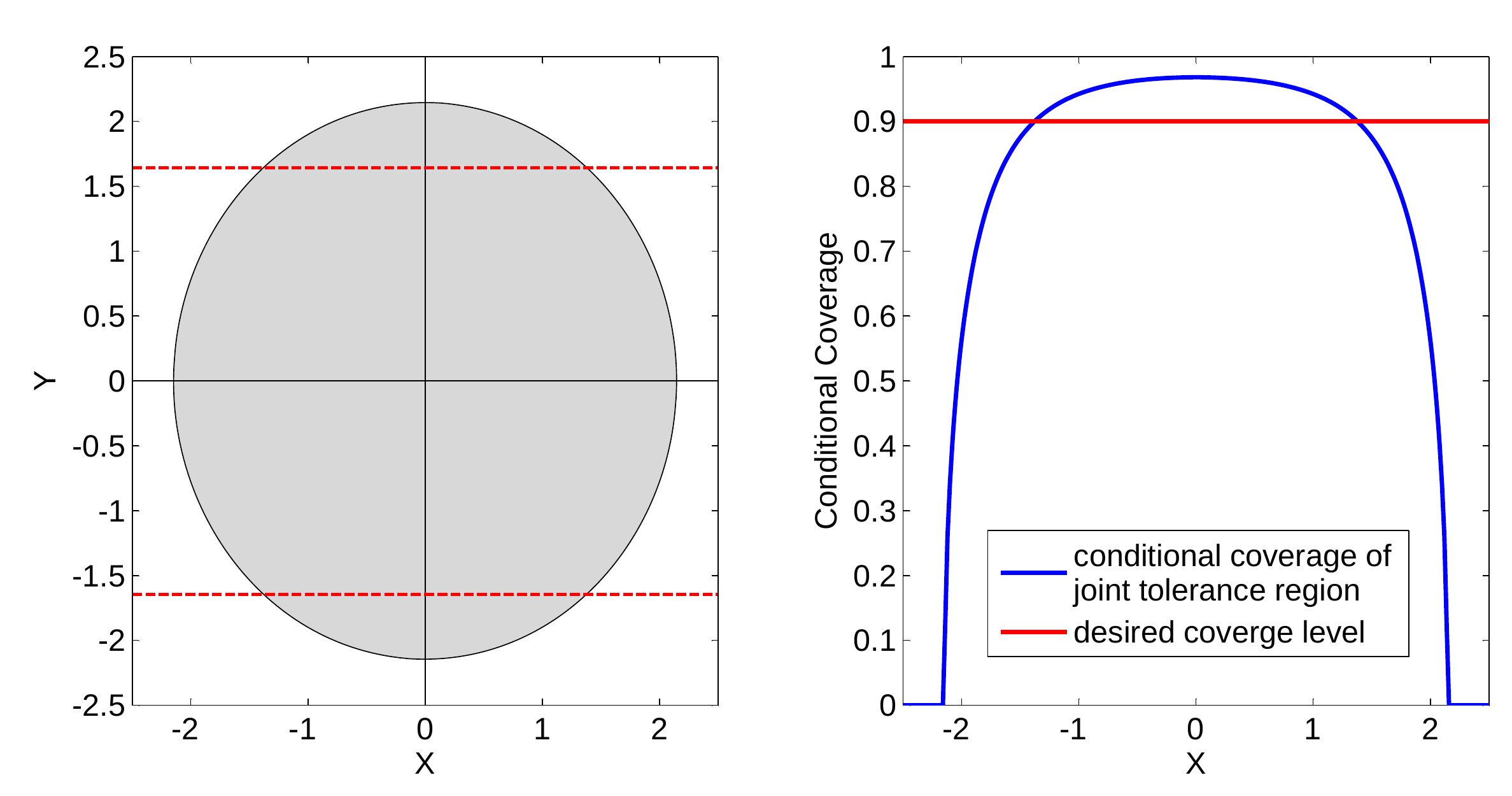}
\caption{Joint prediction set and pointwise conditional
coverage for bivariate independent Gaussian.
Left panel: the gray area is the optimal (with smallest
Lebesgue  measure) prediction set with coverage 0.9, the two red lines are the
upper and lower 5\% quantiles of the marginal distribution of
$Y$.  Right panel: the blue curve plots $P(Y\in C(x)|X=x)$
against $x$; the red line is the desired coverage level 0.9.}
\label{fig:indep_normal}
\end{center}
\end{figure}

Figure \ref{fig:indep_normal} shows the case of a bivariate independent
Gaussian. According to (\ref{eq:oracle_toler}), when $X,Y$ are independent standard normals,
the level set for
any $C^{(\alpha)}$ is a circle centered at the origin as described by the
gray area in the left panel of Figure \ref{fig:indep_normal}.  But intuitively
since observing $X$ provides no information about $Y$, the best
prediction band at level $\alpha$ should be
$C(x)=[-z_{\alpha/2}, z_{\alpha/2}]$, for all $x$, where $z_{\tau}$ is the
$\tau$-th upper quantile of standard normal.  This
band is the set between two red
dashed lines in the left panel of
Figure \ref{fig:indep_normal} for $\alpha=0.1$.

In prediction, another important notion of coverage is the conditional
coverage $P(Y\in C(x)|X=x)$.
The pointwise conditional coverage $P(Y\in C(x)|X=x)$ is plotted
in the right panel of Figure \ref{fig:indep_normal} for the
joint prediction set (blue curve).  We see that the ``optimal''
joint prediction set tends to overestimate the set when $x$
is in the high density area and to underestimate for low density $x$.
Let us now consider conditional validity in more detail.

\subsection{Conditional Validity}

Only requiring (\ref{eq::valid}) for prediction bands is
not enough.
We will refer to (\ref{eq::valid}) as
{\em marginal validity} or {\em joint validity}.
This is the type of validity used in \cite{ShaferV08}.
As illustrated in the example above, it
may be tempting to insist on a more stringent probability guarantee
such as
\begin{equation}\label{eq:cond_valid}
\mathbb{P}(Y_{n+1}\in C_n(x)|X_{n+1}=x) \geq 1-\alpha\ \ \
{\rm for \ all\ }P\ {\rm and\ }
{\rm almost\ all\ }x,
\end{equation}
which we call {\em conditional validity}.
If the joint distribution of $(X,Y)$ is known, one
can define an oracle band as the counterpart of (\ref{eq::valid}) for
conditionally valid bands:
\begin{equation}\label{eq:cond_oracle}
C_P(x)=\Bigl\{y:p(y|x)\ge t^{(\alpha)}(x)\Bigr\}
\end{equation}
where
$t^{(\alpha)}(x)$ satisfies
$$
\int \ind \left\{p(y|x)\ge t^{(\alpha)}(x)\right\} p(y|x)dy= 1-\alpha.
$$
We call
$C_P = \{C_P(x):\ x\in \mathbb{R}^d\}$ the {\em conditional oracle band}.
It is easy to prove that $C_P$ minimizes
$\mu[C(x)]$ for all $x$
among all
bands satisfying
$\inf_x P(Y\in C(x)|X=x)\ge 1-\alpha$.
Note that $C_P$ depends on $P$
but does not depend on the observed data.
For an estimator $\hat C$, {\em asymptotic efficiency} requires
$\hat C(x)$ be close to $C_P(x)$ uniformly over all $x$:
\begin{equation}
\label{eq:asymp_efficiency}
\sup_x\mu\Bigl[\hat C(x)\triangle C_P(x)\Bigr]
\stackrel{P}{\rightarrow}0.
\end{equation}
However, we will show that
there do not exist any prediction bands
$\hat C$ that satisfy both (\ref{eq:cond_valid}) and
(\ref{eq:asymp_efficiency}).  In fact, the following
claim, proved in Subsection \ref{app:pf_impossible},
is even stronger.

Let $P_X$ denote the marginal distribution of $X$
under $P$. A point $x$ is a {\em non-atom} for $P$ if
$x$ is in the support of $P_X$ and if
$P_X[B(x,\delta)]\to 0$
as $\delta\to 0$, where $B(x,\delta)$ is the Euclidean ball
centered at $x$ with radius $\delta$.
Let $N(P)$ denote the set of non-atoms.
We show that if $C_n$
is conditionally valid then the length of $C_n(x)$
is infinite for all $x\in N(P)$.

\begin{lemma}[Impossibility of non-trivial finite sample conditional validity]
\label{lem:impossible}
Suppose that an estimator
$C_n$ has $1-\alpha$ conditional validity.
For any $P$ and any $x_0\in N(P)$,
$$
\mathbb P\Biggl(\lim_{\delta\to 0} \esssup_{||x_0-x||\leq \delta}
\mu[C_n(x)] =\infty\Biggr) =1.
$$
\end{lemma}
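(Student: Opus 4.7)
The plan is to argue by contradiction using an adversarial change-of-measure construction. The underlying intuition is that because $x_0\in N(P)$, the sample almost surely contains no observations arbitrarily close to $x_0$, so $C_n(\cdot)$ near $x_0$ depends only on data that look identical under any alternative law $Q$ whose conditional distribution at $x_0$ has been re-engineered. Conditional validity must then hold under every such $Q$, including ones so spread out that no band of bounded Lebesgue measure can cover $1-\alpha$ of the mass.

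Concretely, suppose the conclusion fails; then for some $K<\infty$ and $\delta_0>0$,
\[
\mathbb P^P(A) > 0, \qquad A \;=\; \Bigl\{\esssup_{x\in B(x_0,\delta_0)}\mu[C_n(x)]\le K\Bigr\}.
\]
Since $x_0\in N(P)$, I may further shrink $\delta_0$ so that $P_X[B(x_0,\delta_0)]$ is as small as I wish; the event $E$ that none of $X_1,\ldots,X_n$ lies in $B(x_0,\delta_0)$ then has $P$-probability arbitrarily close to $1$, so $\mathbb P^P(A\cap E)>0$. Build the adversary $Q$ by keeping $Q_X=P_X$, keeping $Q_{Y\mid X}=P_{Y\mid X}$ off the ball, and replacing the conditional law $Y\mid X=x$ by $\mathrm{Uniform}[-N,N]$ for $x\in B(x_0,\delta_0)$, with $N$ a large free parameter. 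On the event $E$ the sample distributions $P^n$ and $Q^n$ coincide, hence $\mathbb P^Q(A\cap E)=\mathbb P^P(A\cap E)$, uniformly in $N$.

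Apply $1-\alpha$ conditional validity to $Q$: for $Q_X$-a.e.\ $x\in B(x_0,\delta_0)$,
\[
\mathbb E_{Q^n}\!\bigl[\mu\bigl(C_n(x)\cap[-N,N]\bigr)\bigr] \;\ge\; 2N(1-\alpha).
\]
On the event $A$, Fubini and the definition of $\esssup$ give $\mu(C_n(x))\le K$ for a.e.\ fixed $x\in B(x_0,\delta_0)$, while on $A^c$ only the trivial bound $2N$ is available. Splitting the expectation accordingly yields
\[
2N(1-\alpha) \;\le\; K\,\mathbb P^Q(A) + 2N\bigl(1-\mathbb P^Q(A)\bigr),
\]
which rearranges to $\mathbb P^Q(A)\le \alpha/(1-K/(2N))$. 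Letting $N\to\infty$ and then $\delta_0\to 0$ sends the right-hand side to $\alpha$, producing the contradiction with $\mathbb P^P(A)>0$ as soon as one can push the bound past $\alpha$.

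The hard part is closing the gap between this single-adversary bound and the full ``$=1$'' statement: one construction only rules out $\mathbb P^P(A)>\alpha$. To reach the lemma I would iterate the construction over a countable family of adversaries $\{Q_k\}_{k\in\mathbb Z}$ with inside-the-ball conditional laws $\mathrm{Uniform}[k,k+1]$; each $Q_k$ combined with conditional validity and the total-variation transfer step above yields a lower bound on $\mathbb P^P\bigl(\mu(C_n(x)\cap[k,k+1])\ge 1-\beta\bigr)$ valid for a.e.\ $x\in B(x_0,\delta_0)$. Aggregating these constraints across $k$ by a Fubini argument in $x$-space (keeping track of measurability of $C_n(x)$ in $x$) forces the set of $x\in B(x_0,\delta_0)$ on which $\mu[C_n(x)]$ exceeds any prescribed level to have positive Lebesgue measure with full $P$-probability, which is precisely the $\esssup=\infty$ assertion. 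Making this last aggregation rigorous — ensuring that the simultaneous lower bounds at all scales $k$ can be intersected without cost — is the main technical obstacle of the argument.
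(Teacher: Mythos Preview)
Your overall route is the paper's: build an adversary $Q$ that agrees with $P$ off a small ball $B(x_0,\delta)$ and has a very spread-out conditional law inside, then transfer statements about $C_n$ from $Q^n$ back to $P^n$. Your coupling through the event $E=\{\text{no }X_i\in B(x_0,\delta)\}$ plays exactly the role of the paper's total-variation bound ${\sf TV}(P^n,Q^n)\le\epsilon$ (obtained there via a lemma of Donoho); the two devices are interchangeable, and your version is arguably cleaner since it makes the $N$-independence of $\mathbb P^Q(A\cap E)$ transparent.

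The substantive divergence is at the step you flag as ``the hard part.'' Having taken $Q$ with conditional law $\mathrm{Uniform}[-B_0,B_0]$ on the ball, the paper simply asserts $Q^n\bigl(\esssup_{x\in B(x_0,\delta)}\mu[C_n(x)]\ge B\bigr)=1$ and transfers. You instead do the honest Markov split and obtain only $\mathbb P^Q(A)\le\alpha/(1-K/(2N))$, which after $N\to\infty$ and $\delta\to 0$ yields $\mathbb P^P\bigl(\lim_\delta\esssup\le K\bigr)\le\alpha$ rather than $0$. Your caution is well placed: conditional validity as defined in the paper is an \emph{expectation} over the training sample, and the paper's passage to the almost-sure claim $Q^n(\cdot)=1$ is not argued there either.

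Your proposed repair, however, does not close the gap. Iterating over adversaries $Q_k$ with conditional law $\mathrm{Uniform}[k,k+1]$ and transferring via $E$ gives, for each $k$ and a.e.\ $x$, only the expectation bound $\mathbb E_{P^n}\bigl[\mu(C_n(x)\cap[k,k+1])\bigr]\ge 1-\alpha-\eta$; summing over $k$ shows $\mathbb E_{P^n}[\mu(C_n(x))]=\infty$, but an infinite expectation in the sample $\omega$ does not force $\mu(C_n(x))=\infty$ (or even $\esssup_x\mu(C_n(x))\ge M$) almost surely. Fubini in $(x,k)$ cannot convert an average-over-$\omega$ constraint into an almost-sure-in-$\omega$ one --- that conversion is precisely the missing ingredient. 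So your proposal and the paper's proof stall at the same place; you have simply been explicit about where.
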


Thus, non-trivial finite sample conditional validity
is impossible for continuous distributions.
We shall instead construct prediction bands
with an asymptotic version of
(\ref{eq:cond_valid}) together with
finite sample marginal validity.
We say that
$\hat C$ is {\em asymptotically conditionally valid} if
\begin{equation}\label{eq:asymp_cond_valid}
\sup_x \Bigl[ \mathbb{P}(Y_{n+1}\notin C_n(x)|X_{n+1}=x) - \alpha\Bigr]_+
\stackrel{P}{\to} 0
\end{equation}
as $n\to\infty$.
Here, the supremum is taken over the support of
$P_X$.  We note that
if the conditional density $p(y|x)$ is
uniformly bounded for all $(x,y)$, then
asymptotic conditional validity
is a consequence of asymptotic efficiency defined as in
(\ref{eq:asymp_efficiency}).

In Section \ref{sec::method}
we construct a prediction band that satisfies:
\begin{enum}
\item finite sample marginal validity,
\item asymptotic conditional validity and
\item asymptotic efficiency.
\end{enum}

Our method is based on the notion of
\emph{local validity}, which naturally interpolates between
marginal and conditional validity.

\begin{definition}[Local validity]
\label{def:local_valid}
Let
$\mathcal A=\{A_j:j\ge 1\}$ be a partition of ${\rm supp}(P_X)$ such that
each $A_j$
has diameter at most $\delta$.
A prediction band $C_n$ is locally valid with respect
to $\cal A$ if
\begin{equation}
\mathbb
P(Y_{n+1}\in C_n(X_{n+1})|X_{n+1}\in A_j) \geq 1-\alpha,
~~{\rm for\ all\ }j~{\rm and~all~}P.
\end{equation}
\end{definition}
\textbf{Remark.}
From the insight of Lemma \ref{lem:impossible}, it is possible to construct
finite sample locally valid prediction sets
because $X\in A_j$ is an event with positive
probability and hence repeated observations
are available.

\textbf{Remark.}
Consider the limiting case of
$\delta\rightarrow\infty$,
which can be thought as having $A_1={\rm supp}(P_X)$, and
local
validity becomes marginal validity.  On the other hand,
in the extremal case $\delta\rightarrow0$, $A_j$
shrinks
to a single point $x\in \mathbb R^d$, and local
validity approximates conditional validity. We also
note that local validity is stronger than marginal
validity but weaker than conditional validity.
We state the following proposition whose proof is
elementary and omitted.

\begin{proposition}
\label{pro:relation}
If $C$ is conditionally valid, then it is also
locally valid for any partition $\cal A$.
If $C$ is locally valid for some partition $\cal A$,
then it is also marginally valid.
\end{proposition}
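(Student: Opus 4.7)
The plan is to apply the law of total probability twice, once for each direction; both implications reduce to the elementary observation that averaging a function that is pointwise at least $1-\alpha$ against any probability distribution produces a value at least $1-\alpha$.

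For the first implication, I would assume $C$ is conditionally valid, fix an arbitrary partition $\cal A$, and pick any cell $A_j$ with $P_X(A_j) > 0$. Writing the conditional probability as a ratio and using the tower property,
\[
\mathbb{P}\bigl(Y_{n+1}\in C(X_{n+1})\mid X_{n+1}\in A_j\bigr)
= \frac{1}{P_X(A_j)} \int_{A_j} \mathbb{P}\bigl(Y_{n+1}\in C(x)\mid X_{n+1}=x\bigr)\, dP_X(x).
\]
Conditional validity in (\ref{eq:cond_valid}) bounds the integrand below by $1-\alpha$ for $P_X$-almost all $x$, so the right-hand side is at least $1-\alpha$. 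Cells with $P_X(A_j) = 0$ are irrelevant to the definition of local validity.

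For the second implication, I would assume $C$ is locally valid with respect to $\cal A$ and decompose
\[
\mathbb{P}\bigl(Y_{n+1}\in C(X_{n+1})\bigr)
= \sum_j \mathbb{P}\bigl(Y_{n+1}\in C(X_{n+1})\mid X_{n+1}\in A_j\bigr)\, P_X(A_j)
\geq (1-\alpha)\sum_j P_X(A_j) = 1-\alpha,
\]
where the final equality uses that $\cal A$ partitions ${\rm supp}(P_X)$ so the weights sum to one. This is exactly the marginal validity condition (\ref{eq::valid}).

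There is no real obstacle here: both steps are direct consequences of the tower property combined with monotonicity of expectation. The only minor bookkeeping is to handle cells of zero $P_X$-mass, which are harmless since they contribute nothing to the decomposition in the second step and are excluded from the local validity requirement in the first.
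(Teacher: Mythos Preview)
Your argument is correct and is exactly the elementary tower-property computation one expects here; the paper itself omits the proof entirely, stating only that it is ``elementary and omitted.'' Your handling of zero-mass cells is the only detail worth writing down, and you have it right.
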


The relationship between local validity and
asymptotic conditional validity is more complicated
and is one of the technical contributions of this paper.
In Section \ref{sec::method}
we construct a specific class of
locally valid bands. In Theorem
\ref{thm:main_consist} of Section
\ref{sec:rates} we show that under
mild regularity conditions, these
bands are also asymptotically conditionally valid.
To summarize, if $C$ is locally valid then it is also
marginally valid.
And under regularity conditions, it can also be asymptotically conditionally valid.
See Figure \ref{fig::validity}.

\begin{figure}
\begin{center}
\includegraphics[scale=.5]{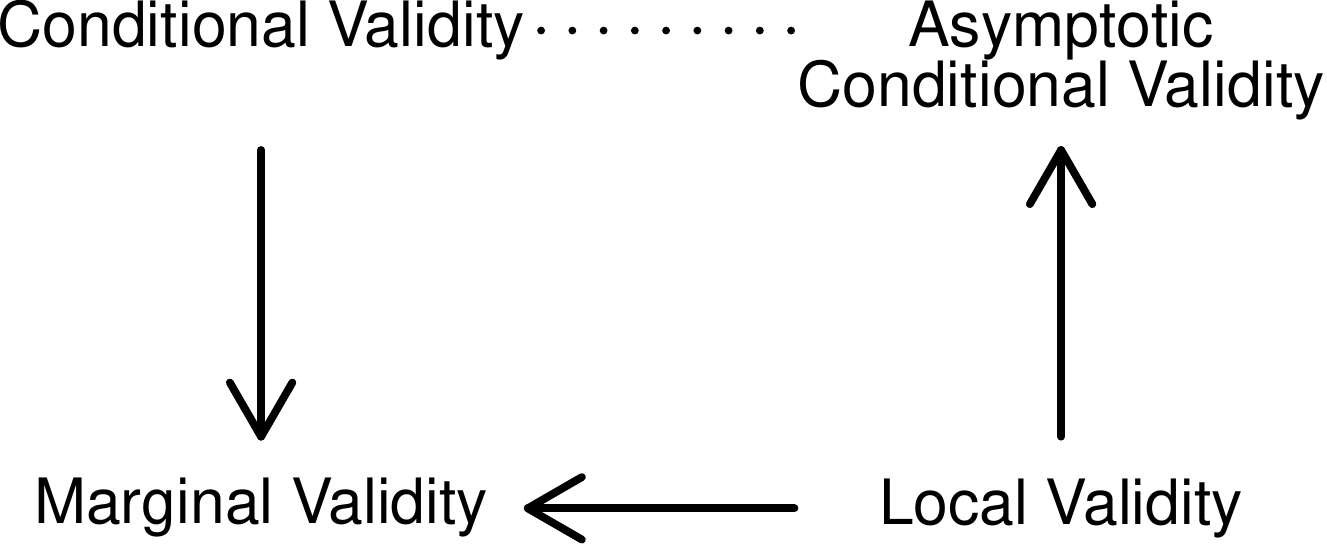}
\end{center}
\caption{Relationship between different types of validity.}
\label{fig::validity}
\end{figure}

How can we construct finite sample locally valid prediction
bands?  A straightforward approach is to apply the
method developed in \citet{LeiRW11} to
$P_j\equiv
\mathcal L(X,Y|X\in A_j)$, the joint distribution of $(X,Y)$ conditional
on the event $X\in A_j$.  Note that we are mostly
interested in the case $\max_j{\rm diam}(A_j)\rightarrow 0$,
therefore the marginal density of $X$ within $P_j$ becomes
increasingly close to uniform.  Therefore,
the approach can be simplified to finding
$C_j\in \mathbb R^1$, such that
$P(Y\in C_j|X\in A_j)\ge 1-\alpha$. This approach is
detailed in Section \ref{sec::method} and analyzed
in Section \ref{sec:rates}.

\section{Methodology}
\label{sec::method}

\subsection{A Marginally Valid Prediction Band}
\label{sec::no-covariates}

We start by recalling the construction of
joint prediction sets using kernel density
together with the idea of conformal prediction, as
described in \citet{LeiRW11},
using the idea of {\em conformal prediction}
developed in
\cite{ShaferV08}, \cite{VovkGS08b} and
\cite{VovkNG09}. This approach is
shown to have finite sample validity as well as
asymptotic efficiency under regularity conditions.
Suppose we observe
$$
Z_1,\ldots, Z_n\sim P
$$
and we want a prediction set for $Z_{n+1}$.
The idea is to test
$H_0: Z_{n+1}=z$ for each $z$ and
then invert the test.
Specifically, for any $z$ let
$\hat p_n^z(\cdot)$
be a density estimator based on
the {\em augmented data}
${\sf aug}(\mathbf Z;z)=(Z_1,\ldots,Z_n,z)$.
Define
$$
C_n \equiv C_n(Z_1,\ldots,Z_n) = \left\{z:\ \pi_n(z)\geq \alpha\right\}
$$
where
$$
\pi_n(z) = \frac{1}{n+1}\sum_{i=1}^{n+1} \ind(\sigma_i(z)\leq \sigma_{n+1}(z))
$$
is the p-value for the test,
$\sigma_i(z) = \hat{p}_n^z(Z_i)$ for $i=1,\ldots, n$
and
$\sigma_{n+1}(z) = \hat{p}_n^z(z)$.
The statistic $\sigma_i$ is an example of a {\em conformity measure}.
More generally, a conformity measure
$\sigma_i(z)=\sigma({\sf aug}(\mathbf Z, z),Z_i)$ indicates
how well a data point $Z_i$ agrees with
the augmented data set ${\sf aug}(\mathbf Z,z)$.  In principle
$\sigma(\cdot,\cdot)$ can be any function but usually it makes sense to
use the fitted residual or likelihood at $Z_i$ with respect
to a model estimated from ${\sf aug}(\mathbf Z, z)$.

The intuition for $C_n$
is the following.
Fix an arbitrary value $z$.
To test
$H_0: Z_{n+1}=z$
we use
the heights of the density estimators $\sigma_i(z)=\hat p_n^z(Z_i)$
as a test-statistic.
(Note that $\sigma_1,\ldots,\sigma_{n+1}$
are functions of ${\sf aug}(\mathbf Z,z)$.)
Under $H_0$, the ranks of the
$\sigma_i$ are uniform, because the joint distribution
of $(Z_1,...,Z_n,Z_{n+1})$ does not change under
permutations.  Hence the vector
$(\sigma_1,...,\sigma_{n+1})$ is exchangeable.
Therefore, under $H_0$,
$\pi_n(z)$ is uniformly distributed over $[0,1]$
and is a valid p-value for the test.\footnote{More Precisely, it is sub-uniform due to
the discreteness.}
The set $C_n$ is obtained by inverting
the hypothesis test, that is, $C_n$ consists of all values $z$
that are not rejected by the test.
It then follows that
$\mathbb{P}(Z_{n+1}\in C_n)\geq 1-\alpha$ for all $P$.

In \cite{LeiRW11}, the density $\hat p_n^z$ is obtained from
kernel density estimators with bandwidth $h$.
\cite{LeiRW11} show that $\hat C^{(\alpha)}$
is also efficient meaning that
it is close to $C^{(\alpha)}$ with high probability
where $C^{(\alpha)}$ is the smallest set with
probability content $1-\alpha$ as defined in (\ref{eq:oracle_toler}).

Computing $\hat C^{(\alpha)}$ is expensive since
we need to find the
the p-value $\pi_n(z)$ for every $z$.
\cite{LeiRW11} proposed the following
approximation $C_n^+$ to $C_n$---called the sandwich approximation---
which avoids the augmentation step altogether but preserves finite sample
validity.
Let
$Z_{(1)},Z_{(2)},\ldots,$
denote the data ordered increasingly by
$\hat p(Z_i)$.
Let $j=\lfloor n\alpha\rfloor$ and
define
\begin{equation}\label{eq::sandwich-approx}
C_n^+ = \Biggl\{ z:\ \hat{p}(z) \ge \hat{p}(Z_{(j)}) - \frac{K(0)}{nh^d} \Biggr\}.
\end{equation}
\cite{LeiRW11} show that
$\hat C^{(\alpha)}\subseteq C_n^+$
and hence
$C_n^+$ also has finite sample validity.
Moreover,
$C_n^+$ has the same efficiency properties as $C_n$
if $h$ is chosen appropriately.
This result, known as the ``Sandwich Lemma'', provides a simple
characterization of the conformal prediction set $\hat C^{(\alpha)}$
in terms of the plug-in density level set. In this paper,
a specific version of the Sandwich Lemma
for the conditional density is stated in
Lemma \ref{lem:sandwich}.
Thus, using the sandwich approximation we get a fast method
for constructing a valid band, based on slicing the joint density.

Now let $Z=(X,Y)$.
The $x$-slices of the joint region for $Z$
define a marginally valid band.
Specifically,
let $K_x$ and $K_y$ be two kernel functions in $\mathbb R^d$ and $\mathbb R^1$,
respectively.
Consider the kernel density estimator: For any $(u,v)\in \mathbb R^{d}\times
\mathbb R^1$:
\begin{equation}
\hat p_{n;X,Y}(u,v)=\frac{1}{n}\sum_{i=1}^n\frac{1}{h_n^{d+1}}
K_x\left(\frac{u-X_i}{h_n}\right)K_y\left(\frac{v-Y_i}{h_n}\right).
\end{equation}

For any $(x,y)\in\mathbb R^d\times\mathbb R^1$,
let $({\bf X, Y})=(X_1,Y_1, \ldots ,X_n,Y_n)$ be the data set and
${\sf aug}({\bf X},{\bf Y}; (x, y))$ be
the augmented data with $X_{n+1}=x$ and $Y_{n+1}=y$.
Define $\hat p^{(x,y)}_{n;X,Y}$ be
the kernel density estimator from the augmented data:
\begin{equation}
\hat p_{n;X,Y}^{(x,y)}(u,v)=\frac{n}{n+1}\hat p_{n;X,Y}(u,v)+\frac{1}{(n+1)h_n^{d+1}}
K_x\left(\frac{u-x}{h_n}\right)K_y\left(\frac{v-y}{h_n}\right)\,.
\end{equation}
Define the conformity measure
\begin{equation}
\sigma_i(x,y):=
\hat p_{n;X,Y}^{(x,y)}(X_i,Y_i).
\end{equation}
and p-value
\begin{equation}
\pi_{i}=\frac{1}{n+1}\sum_{j=1}^{n+1} \ind(\sigma_j(x,y)\le \sigma_i(x,y))\,,
\quad{\rm for}\quad 1\le i\le n+1.
\end{equation}
Let $
\tilde\alpha = \lfloor (n+1)\alpha \rfloor/(n+1).
$
Since $(X_i,Y_i)_{i=1}^{n+1}$ are iid, by exchangeability, we have, for all $i$,
\begin{equation}\label{eq::exch}
\mathbb P(\pi_i \ge \tilde\alpha)\ge 1-\alpha.
\end{equation}
Define
$$
\hat C^{(\alpha)}(x)=\left\{y:\pi_{n+1}(x,y)\ge \tilde\alpha\right\},
$$
where
$\pi_{n+1}\equiv \pi_{n+1}\left[{\sf aug}({\bf X,Y}; (x,y))\right]$.
From (\ref{eq::exch}) we have:

\begin{lemma}
$\hat C^{(\alpha)}(x)$ is finite sample marginally valid:
$$
\mathbb{P}\left[Y_{n+1}\in \hat C^{(\alpha)}(X_{n+1})\right]\geq 1-\alpha\ \ \ {\rm for\ all\ }P.
$$
\end{lemma}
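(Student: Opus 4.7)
The plan is to derive the conclusion directly from the exchangeability identity already recorded as equation (\ref{eq::exch}), after unwinding what the event $\{Y_{n+1}\in \hat C^{(\alpha)}(X_{n+1})\}$ actually says in terms of the p-value $\pi_{n+1}$.

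First I would rewrite the event of interest. By the definition of $\hat C^{(\alpha)}$, we have
\begin{equation*}
Y_{n+1}\in \hat C^{(\alpha)}(X_{n+1})
\iff
\pi_{n+1}\bigl[{\sf aug}({\bf X,Y};(X_{n+1},Y_{n+1}))\bigr]\ge \tilde\alpha.
\end{equation*}
So it suffices to show $\mathbb P(\pi_{n+1}\ge \tilde\alpha)\ge 1-\alpha$ when $\pi_{n+1}$ is evaluated at the genuine $(X_{n+1},Y_{n+1})$.

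Next I would verify the exchangeability step that underlies (\ref{eq::exch}). When $(x,y)=(X_{n+1},Y_{n+1})$, the augmented dataset is simply $(Z_1,\ldots,Z_{n+1})$ with $Z_i=(X_i,Y_i)$, and the kernel density estimator $\hat p_{n;X,Y}^{(X_{n+1},Y_{n+1})}$ is a symmetric function of this $(n+1)$-tuple. Consequently each conformity score $\sigma_i(X_{n+1},Y_{n+1}) = \hat p_{n;X,Y}^{(X_{n+1},Y_{n+1})}(X_i,Y_i)$ is a fixed symmetric functional of the sample evaluated at the $i$-th point, so the vector $(\sigma_1,\ldots,\sigma_{n+1})$ is an exchangeable function of an iid (hence exchangeable) tuple. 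Therefore the rank of $\sigma_{n+1}$ among $\sigma_1,\ldots,\sigma_{n+1}$ is uniformly distributed on $\{1,\ldots,n+1\}$, up to the ordering convention for ties. This gives
\begin{equation*}
\mathbb P\bigl(\pi_{n+1}\le k/(n+1)\bigr)\le k/(n+1),\qquad k=0,1,\ldots,n+1,
\end{equation*}
i.e., $\pi_{n+1}$ is stochastically larger than uniform on $\{1/(n+1),2/(n+1),\ldots,1\}$.

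Finally I would apply this with $k=\lfloor(n+1)\alpha\rfloor$, so $k/(n+1)=\tilde\alpha\le\alpha$; then
\begin{equation*}
\mathbb P\bigl(\pi_{n+1}<\tilde\alpha\bigr)\le \mathbb P\bigl(\pi_{n+1}\le \tilde\alpha - 1/(n+1)\bigr)\le \tilde\alpha - 1/(n+1)<\alpha,
\end{equation*}
which yields $\mathbb P(\pi_{n+1}\ge\tilde\alpha)\ge 1-\alpha$ and hence the claim. The only mildly delicate point, and the reason for the floor in $\tilde\alpha$, is handling ties among the $\sigma_i$: the p-value is only sub-uniform rather than exactly uniform, so one must choose the threshold at a grid point $k/(n+1)$ to keep the bound sharp without any distributional assumption on $P$.
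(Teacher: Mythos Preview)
Your proposal is correct and follows the same approach as the paper: the paper simply states ``From (\ref{eq::exch}) we have'' immediately before the lemma, so the entire argument is the exchangeability/sub-uniformity fact (\ref{eq::exch}) applied with $i=n+1$, combined with the observation that $Y_{n+1}\in\hat C^{(\alpha)}(X_{n+1})$ is by definition the event $\pi_{n+1}\ge\tilde\alpha$. You have spelled out these steps in more detail than the paper does, including a careful treatment of ties and the role of the floor in $\tilde\alpha$, but the route is identical.
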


Now we use the sandwich approximation to the joint conformal region
for $(X,Y)$.
The resulting band $C_n^+(x)$ is obtained by fixing $X=x$
and taking slices of the joint region and
is then a marginally valid band.
See Algorithm 1.

\begin{figure}
\begin{center}
\fbox{\parbox{6in}{
\begin{center}
{\sf Algorithm 1. Sandwich Slicer Algorithm}
\end{center}
\begin{enum}
\item Let $\hat p(x,y)$ be the joint density estimator.
\item Let $Z_i = (X_i,Y_i)$ and let
$Z_{(1)},Z_{(2)},\ldots,$
denote the sample ordered increasingly by
$\hat{p}(X_i,Y_i)$.
\item Let $j=\lfloor n\alpha\rfloor$ and
define
\begin{equation}\label{eq::sandwich-approx2}
C_n^+(x) = \left\{ y:\ \hat{p}(x,y) \ge \hat{p}(X_{(j)},Y_{(j)}) - \frac{K_x(0)K_y(0)}{nh^{d+1}} \right\}.
\end{equation}
\end{enum}
}}
\end{center}
\end{figure}

To summarize: the band
given in
Algorithm 1
is marginally valid.
But it is not efficient nor does it satisfy
asymptotic conditional validity.
This leads to the subject of the next section.

\subsection{Locally Valid Bands}
\label{subsec:local_valid_construct}

Now we extend the idea of conformal prediction
to construct prediction bands with local validity.
These bands will also be
asymptotically efficient and
have
asymptotic conditional validity.
For simplicity of presentation, we assume that
${\rm supp}(P_X)=[0,1]^d$
where
${\rm supp}(P_X)$ denotes the support of $P_X$
and we consider partitions $\mathcal A=\{A_k,k\ge 1\}$
in the form of cubes
with sides of length $w_n$.
Let $n_k=\sum_{i=1}^n\mathbf 1(X_i\in A_k)$
be the histogram count.

Given a kernel function $K(\cdot):\mathbb R^1\mapsto \mathbb R^1$ and
another bandwidth $h_n$, consider the estimated
local marginal density of $Y$:
$$
\hat p(y|A_k)=\frac{1}{n_k h_n}
\sum_{i=1}^n\ind(X_i\in A_k)K\left(\frac{Y_i-y}{h_n}\right).
$$
The corresponding augmented estimate is, for
any $(x,y)\in A_k\times\mathbb R^1$,
\begin{equation}\label{eq:local_kernel_density_aug}
\hat p^{(x,y)}(v|A_k)=\frac{n_k}{n_k+1}\hat p(v|A_k)+\frac{1}{(n_k+1)h_n}K\left(
\frac{v-y}{h_n}\right).
\end{equation}
For any $(x,y)\in A_k\times \mathbb R^1$,
consider the following {\em local conformity rank}
\begin{equation}\label{eq:ldr}
\pi_{n,k}(x,y)=
\frac{1}{n_k+1}
\sum_{i=1}^{n+1}
\ind(X_i\in A_k)\ind\left[\hat p^{(x,y)}(Y_i|A_k)\le \hat
p^{(x,y)}(Y_{n+1}|A_k)\right]\,,\end{equation}
which can be interpreted as
the local conditional density rank.  It is easy to check
that the $\pi_{n,k}(x,y)$ has a sub-uniform distribution if
$(X_{n+1},Y_{n+1})=(x,y)$ is another independent sample from $P$.
Therefore, the band
\begin{equation}\label{eq:loc_val_band}
\hat C(x)=\{\pi_{n,k}(x,y)\ge \alpha\}\end{equation}
for $x\in A_k$
has finite sample local validity.

\begin{proposition}\label{pro:loc_valid_constr}
For $x\in A_k$, let
$\hat C(x)=\{y: \pi_{n,k}(x,y)\ge \alpha\}$,
where $\pi_{n,k}(x,y)$ is
defined as in
(\ref{eq:ldr}), then $\hat C(x)$
is finite sample locally valid and hence finite
sample marginally valid.
\end{proposition}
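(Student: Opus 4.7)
The plan is a conformal-style exchangeability argument applied to the subsample of observations whose $X$-coordinate falls in $A_k$, mirroring the augmentation construction recalled in Section \ref{sec::no-covariates}. I would first condition on the event $\{X_{n+1}\in A_k\}$ and on the random index set $I_k=\{i\in\{1,\ldots,n+1\}:X_i\in A_k\}$, which under this conditioning has cardinality $n_k+1$ and contains the index $n+1$. Given $I_k$, the pairs $\{(X_i,Y_i):i\in I_k\}$ are iid draws from the conditional law $\mathcal{L}(X,Y\mid X\in A_k)$, so in particular the collection $(Y_i)_{i\in I_k}$ is exchangeable.

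The crux is then the observation that the augmented local density estimator $\hat p^{(X_{n+1},Y_{n+1})}(\cdot\mid A_k)$ defined in (\ref{eq:local_kernel_density_aug}) depends on $\{Y_i:i\in I_k\}$ symmetrically, as a sum of kernels centered at those values. Consequently, any permutation of the labels within $I_k$ leaves the function $\hat p^{(X_{n+1},Y_{n+1})}(\cdot\mid A_k)$ unchanged and merely permutes the vector
\[
\bigl(\hat p^{(X_{n+1},Y_{n+1})}(Y_i\mid A_k)\bigr)_{i\in I_k}
\]
in the corresponding way. Combined with the exchangeability of $(Y_i)_{i\in I_k}$, this shows that the rank of $\hat p^{(X_{n+1},Y_{n+1})}(Y_{n+1}\mid A_k)$ among those $n_k+1$ values is uniformly distributed on $\{1,2,\ldots,n_k+1\}$ up to ties, hence sub-uniform.

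Since $\pi_{n,k}(X_{n+1},Y_{n+1})$ in (\ref{eq:ldr}) is exactly this normalized rank divided by $n_k+1$, we obtain
\[
\mathbb P\bigl(\pi_{n,k}(X_{n+1},Y_{n+1})\ge \alpha\bigm|X_{n+1}\in A_k\bigr)\ge 1-\alpha
\]
by first conditioning on $I_k$ and then integrating, which is exactly finite sample local validity. Finite sample marginal validity then follows immediately from Proposition \ref{pro:relation}. The only delicate point is the bookkeeping around the random set $I_k$ and verifying that the symmetry of the kernel sum survives the augmentation at $(X_{n+1},Y_{n+1})$; ties introduced by the discreteness of the rank only inflate the p-value, so the inequality goes the right way and no regularity assumption on $P$ is required.
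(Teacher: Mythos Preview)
Your proposal is correct and follows essentially the same exchangeability argument as the paper: condition on $X_{n+1}\in A_k$ and on the local index set, observe that the conformity scores $\sigma_{i_\ell}=\hat p^{(X_{n+1},Y_{n+1})}(Y_{i_\ell}\mid A_k)$ form an exchangeable sequence because the augmented kernel estimator is symmetric in the local $Y$-values, and conclude that the rank-based p-value is sub-uniform. The paper's proof is terser and stops at exchangeability, while you spell out the symmetry of the kernel sum, the handling of ties, and the appeal to Proposition~\ref{pro:relation} for marginal validity, but the underlying idea is identical.
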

\begin{proof}
Fix $k$, let $\{i_1,...,i_{n_k}\}=\{i:1\le i\le
n,~X_i\in A_k\}$.
Let $(X_{n+1},Y_{n+1})\sim P$ be another
independent sample. Define $i_{n_k+1}=n+1$
and $\sigma_{i_{\ell}}=\hat p^{(x,y)}(Y_{i_\ell}|A_k)$
for all $1\le \ell\le n_{k}+1$.
Then conditioning on
the event
$X_{n+1}\in A_k$ and $(i_1,...,i_{n_k})$, the
sequence
$(\sigma_{i_1},...,\sigma_{i_{n_k}},\sigma_{i_{n_k+1}})$ is
exchangeable.
\end{proof}

We call $\hat C$ the Conformal Optimized Prediction Set
(COPS) estimator, where the word ``optimized'' stands for
the effort of minimizing the average interval length
$\E_X \hat C(X)$.

We give a fast approximation algorithm that is
analogous to Algorithm 1.  The resulting approximation
also satisfies finite sample local validity as well as
asymptotic efficiency as shown in Section \ref{sec:rates}.
See Algorithm 2.
\begin{figure}
\begin{center}
\fbox{\parbox{6in}{
\begin{center}
{\sf Algorithm 2:
Local Sandwich Slicer Algorithm}
\end{center}
\begin{enum}
\item Divide ${\cal X}$ into bins
$A_1,\ldots, A_m$.
\item Apply Algorithm 1
separately on all $Y_i$'s within each $A_k$.
\item Output $C_n^+(x)$: the resulting set
of $A_k$ for all $x\in A_k$.
\end{enum}
}}
\end{center}
\end{figure}

\begin{remark}\label{rem:flexible}
In the approach described above, the local
conformity measure is $\hat p^{(x,y)}(v|A_k)$.  In principle
one can use any conformity measure that
does not need to depend on the partition $A_k$, as long as the symmetry condition
is satisfied.
For example, one can use either the estimated joint density
$\hat p^{(x,y)}(u,v)$ or the estimated conditional density
$\hat p^{(x,y)}(v|u)$.  We note that when ${\sf diam}(A_k)$
is small, these choices of conformity measure are
close to each other since $p_X(x)$ and $p(\cdot|x)$ change
very little when $x$ varies inside $A_k$.
\end{remark}

\begin{remark} Although one can choose any conformity measure,
in order to have local validity  the
ranking must be based on a local subset of the sample.  When $A_k$
is small and the distribution is smooth enough,
the local sample $(X_{i_\ell}:1\le \ell\le n_k)$ approximates independent observations from
$p(\cdot|X=x)$ for $x\in A_k$, which can be used to approximate
the conditional oracle $C_P(x)$.
\end{remark}

\section{Asymptotic Properties}\label{sec:rates}

In this section we investigate the asymptotic efficiency of
the locally valid prediction band given in (\ref{eq:loc_val_band}).
The efficiency argument is similar for other choices of
conformity measures, such as joint density or conditional density.
Again, we focus on cases
where ${\rm supp}(P_X)=[0,1]^d$ and
$\mathcal A$ is a cubic histogram with width $w_n$.
The conformity measure is
$\hat p^{(x,y)}(Y_i|A_k)$ for $x\in A_k$,
where $\hat p^{(x,y)}(v|A_k)$ is defined as in equation
(\ref{eq:local_kernel_density_aug})
with kernel bandwidth  $h_n$.

\subsection{Notation}

In the subsequent arguments,
$p_X(\cdot)$ denotes the marginal density of $X$,
$p(y|x)$ the conditional density of $Y$ given $X=x$, and
$p(y|A_k)$ the conditional density of $Y$ given $X\in A_k$.
The kernel estimator of $p(y|A_k)$ is
denoted by $\hat p(\cdot|A_k)$ and
$\hat P(\cdot|A_k)$ is the empirical distribution
of $(Y|X\in A_k)$.

The upper and lower level sets of
conditional density $p(y|x)$ are denoted by
$L_x(t)\equiv \{y:p(y|x)\ge t\}$ and $L_x^\ell(t)
\equiv \{y:p(y|x)\le t\}$, respectively;
$\hat L_k(t)$, $\hat L_k^\ell(t)$ are
the counterparts of $L_x(t)$ and $L_x^\ell(t)$,
defined for $\hat p(\cdot|A_k)$.
As in the definition of
conditional oracle, $t_x^{(\alpha)}$ is solution to the
equation $P_x(L_x(t))=1-\alpha$.
Its existence and uniqueness is guaranteed if the contour
$\{y:p(y|x)=t\}$ has zero measure for all
$t>0$.
Finally we let $G_x(t)=P_x(L_x^\ell(t))$.

\subsection{The Sandwich Lemma}

Heuristically, $\hat p(y|A_k)\approx p(y|x)$ for
$x\in A_k$ when ${\rm diam}(A_k)$ is small
and $p(y|x)$ varies smoothly in $x$.  As a result,
the estimated densities $\hat p^{(x,y)}(Y_i|A_k)$
can be viewed as roughly a sample from $p(Y|x)$, and hence
$\hat C(x)$ approximates the conditional oracle
$C_P(x)$.
First we show that
$\hat C(x)$ can be approximated by
two plug-in conditional density level sets (Lemma \ref{lem:sandwich}).
For a fixed $A_k\in \mathcal A$,
conditioning on $(i_1,...,i_{n_k})$,
let $(X_{(k,\alpha)},Y_{(k,\alpha)})$
be the element of $\{(X_{i_1},Y_{i_1}),$
$...,(X_{i_{n_k}},Y_{i_{n_k}})\}$ such that
$\hat p(Y_{(k,\alpha)}|A_k)$ ranks
$\lfloor n_k\alpha\rfloor$ in ascending order
among all
$\hat p(Y_{i_{j}}|A_k)$, $1\le j\le n_k$.

\begin{lemma}[The Sandwich Lemma \citep{LeiRW11}]
\label{lem:sandwich}
For any fixed $\alpha\in(0,1)$, if $\hat
C(x)$ is defined in (\ref{eq:loc_val_band}) and
$||K||_\infty=K(0)$, then $\hat C(x)$
is ``sandwiched'' by two plug-in conditional density level sets:
\begin{equation}
\hat L\left(\hat
p\left(X_{(k,\alpha)},Y_{(k,\alpha)}|A_k\right)\right)
\subseteq \hat C(x)
\subseteq
\hat L\left(\hat p(X_{(k,\alpha)},Y_{(k,\alpha)}|A_k)-(n_kh_n)^{-1}\psi_K\right),
\end{equation}
where $\psi_K=\sup_{x,x'}|K(x)-K(x')|$.
\end{lemma}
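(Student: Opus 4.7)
The plan is to reduce everything to the explicit additive identity
\[
\hat p^{(x,y)}(v|A_k) = \frac{n_k}{n_k+1}\hat p(v|A_k) + \frac{1}{(n_k+1)h_n}K\!\left(\frac{v-y}{h_n}\right),
\]
which shows the augmented conditional density is just a rescaling of the un-augmented one plus a single kernel bump centered at $y$ with height at most $K(0)/((n_k+1)h_n)$. Writing $\sigma_\ell=\hat p^{(x,y)}(Y_{i_\ell}|A_k)$ for the local indices and $\sigma_{n_k+1}=\hat p^{(x,y)}(y|A_k)$ for the augmented point, subtracting gives
\[
\sigma_{n_k+1}-\sigma_\ell=\frac{n_k}{n_k+1}\bigl(\hat p(y|A_k)-\hat p(Y_{i_\ell}|A_k)\bigr)+\frac{1}{(n_k+1)h_n}\bigl(K(0)-K((Y_{i_\ell}-y)/h_n)\bigr),
\]
and both containments come down to signing this expression.

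For the inner inclusion I would fix $y$ with $\hat p(y|A_k)\ge t^{\star}:=\hat p(Y_{(k,\alpha)}|A_k)$ and verify $\pi_{n,k}(x,y)\ge\alpha$. For each of the $\lfloor n_k\alpha\rfloor$ indices $\ell$ with $\hat p(Y_{i_\ell}|A_k)\le t^{\star}$, the first bracket above is nonnegative, and the second bracket is nonnegative because the hypothesis $\|K\|_\infty=K(0)$ forces $K((Y_{i_\ell}-y)/h_n)\le K(0)$. Hence $\sigma_\ell\le\sigma_{n_k+1}$, and, including the augmented point itself in the count, $\pi_{n,k}(x,y)\ge (\lfloor n_k\alpha\rfloor+1)/(n_k+1)\ge\alpha$. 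For the outer inclusion I would argue the contrapositive: suppose $\hat p(y|A_k)<t^{\star}-\psi_K/(n_kh_n)$. Then for each of the at least $n_k-\lfloor n_k\alpha\rfloor+1$ indices $\ell$ with $\hat p(Y_{i_\ell}|A_k)\ge t^{\star}$, the first term of $\sigma_{n_k+1}-\sigma_\ell$ drops strictly below $-\frac{n_k}{n_k+1}\cdot\frac{\psi_K}{n_kh_n}=-\psi_K/((n_k+1)h_n)$, while the second term has absolute value at most $\psi_K/((n_k+1)h_n)$ by the very definition of $\psi_K$. The two bounds combine to give $\sigma_\ell>\sigma_{n_k+1}$ strictly for these indices, so at most $\lfloor n_k\alpha\rfloor$ of the $n_k+1$ conformity values can lie at or below $\sigma_{n_k+1}$, and consequently $\pi_{n,k}(x,y)\le\lfloor n_k\alpha\rfloor/(n_k+1)<\alpha$.

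Mechanically the whole proof is therefore driven by the algebraic identity together with the two defining properties $\|K\|_\infty=K(0)$ and $\sup_{u,u'}|K(u)-K(u')|=\psi_K$, with no probabilistic content whatsoever. The only real obstacle I expect is the discrete rank bookkeeping around the cutoff: ensuring that the integer counts $\lfloor n_k\alpha\rfloor+1$ and $\lfloor n_k\alpha\rfloor$ line up correctly against the threshold $(n_k+1)\alpha$ under both strict and non-strict comparisons, and handling ties among the $\hat p(Y_{i_\ell}|A_k)$ at level $t^{\star}$ with a consistent convention. These are elementary once a tie-breaking rule (or the replacement of $\alpha$ by $\lfloor(n_k+1)\alpha\rfloor/(n_k+1)$ as in the marginally valid construction) is fixed.
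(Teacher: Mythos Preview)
The paper does not actually prove Lemma~\ref{lem:sandwich}; it merely states the result and attributes it to \cite{LeiRW11}, then invokes it again verbatim inside the proof of Theorem~\ref{thm:main_consist}. Your argument is precisely the standard one: expand the augmented estimator via the additive identity, take the difference $\sigma_{n_k+1}-\sigma_\ell$, and sign it using $\|K\|_\infty=K(0)$ for the inner set and the oscillation bound $\psi_K$ for the outer set. There is nothing to compare; this is the intended proof.

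Your diagnosis of the only delicate point is also accurate. The inequality $(\lfloor n_k\alpha\rfloor+1)/(n_k+1)\ge\alpha$ is equivalent to $\{n_k\alpha\}\le 1-\alpha$, which can fail for some pairs $(n_k,\alpha)$ (e.g.\ $n_k=10$, $\alpha=0.99$); the paper is silent on this and in practice absorbs it into the convention $\tilde\alpha=\lfloor(n_k+1)\alpha\rfloor/(n_k+1)$ used earlier for the marginal construction, exactly as you suggest. With that adjustment (or simply for $\alpha$ such that the fractional-part condition holds), both inclusions go through as you wrote them.
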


The Sandwich Lemma
provides simple and accurate characterization of $\hat C(x)$
in terms of plug-in conditional density level sets, which are
much easier to estimate. The asymptotic properties of $\hat C(x)$ can be
obtained by those of the sandwiching sets.

\subsection{Rates of convergence}

To show the asymptotic efficiency of $\hat C(x)$, it
suffices to show efficiency for both sandwiching sets in Lemma
\ref{lem:sandwich}.  We need regularity conditions to quantify
and control the approximations $p(y|x)\approx p(y|A_k)$,
$\hat p(y|A_k)\approx p(y|A_k)$, and $\hat L_k (t)\approx L_x(t)$.

The following assumption puts boundedness and smoothness
conditions on the marginal density $p_X$, conditional
density $p(y|x)$, and its derivatives.

\noindent\textbf{Assumption A1 (regularity of marginal
and conditional densities)}
\begin{enum}
  \item [(a)]
The marginal density of $X$
   satisfies
   $0<p_0\le p_X(x)\le
   p_1<\infty$ for all $x$.
  \item [(b)] For all $x$,
  $p(\cdot|x)$ is H\"{o}lder class $\mathcal P(\beta, L)$. Correspondingly,
  the kernel $K$ is a valid kernel of order $\beta$.
  \item [(c)] For any $0\le s\le\lfloor \beta\rfloor$,
  $p^{(s)}(y|x)$ is continuous and uniformly bounded by $L$ for all $x,y$.
  \item [(d)] The conditional density is Lipschitz in
  $x$: $||p(\cdot|x)-p(\cdot|x')||_\infty \le L ||x-x'||$.
\end{enum}
The H\"{o}lder class of smooth functions and
valid kernels are common concepts in
nonparametric density estimation.  We give their definitions in
Appendix \ref{app:def}.
Assumptions A1(b) and A1(c) implies that $p(\cdot|A_k)$ is also
in a H\"{o}lder class and can be estimated well by kernel estimators.
A2(d) enables us to approximate $p(\cdot|x)$ by $p(\cdot|A_k)$ for
all $x\in A_k$.

The next assumption gives sufficient regularity condition
on the level sets $L_x(t)$.

\noindent\textbf{Assumption A2 (regularity of conditional density level set)}
\begin{enum}
  \item [(a)]  There exist
   positive constants
   $\epsilon_0$, $\gamma$, $c_1$, $c_2$,
    such that
  $$c_1(t_2-t_1)^\gamma \le G_x(t_2)-G_x(t_1)\le c_2(t_2-t_1)^\gamma,$$
  for all $t_x^{(\alpha)}-\epsilon_0\le t_1\le t_2\le
  t_x^{(\alpha)}+\epsilon_0$.
  \item [(b)]There exist positive constants $t_0$ and $C$, such that
  $0<t_0<\inf_xt_x^{(\alpha)}$ and $\mu(L_x(t_0))<C$ for all $x$.
  \end{enum}
Assumption A2(a) is related to the notion of ``$\gamma$-exponent''
condition introduced by \cite{Polonik95} and widely used
in the density level set literature \citep{Tsybakov97,RigolletV09}.
It ensures that the conditional density function
$p(\cdot|x)$ is neither too flat nor too steep near the contour
at level $t_x^{(\alpha)}$, so that the cut-off value $t_x^{(\alpha)}$
and the conditional density level set $C_P(x)$ can be approximated
from a finite sample.
As mentioned in \cite{AudibertT07}, if Assumption
A1(b) also holds, the oracle band $C_P(x)$
is non-empty only if $\gamma(\beta \wedge 1)\le 1$,
which holds for the most common case $\gamma=1$.
Part (b) simply simply puts some constraints on
 the optimal levels as well as the size
of the level sets.

The following critical rate will be used
repeatedly in our analysis.
\begin{equation}\label{eq:critical_rate}
r_n=\left(\frac{\log n}{n}\right)^{\frac{\beta}{\beta(d+2)+1}}.
\end{equation}

The rate may appear to be non-standard.
This is because we are assuming difference amounts of smoothness on $y$ and $x$.
This seems to be necessary to achieve both marginal and local validity.
We do not know of any procedure that uses a smoother construction and still retains
finite sample validity.
The next theorem gives
the convergence rate on the asymptotic efficiency of
the locally valid prediction band constructed in
Subsection \ref{subsec:local_valid_construct}.

\begin{theorem}\label{thm:main_consist}
Let $\hat C$ be the prediction band given by
the local conformity procedure as described in
(\ref{eq:loc_val_band}).  Choose
$w_n\asymp r_n$,
$h_n\asymp r_n^{1/\beta}$.
Under Assumptions A1-A2, for any $\lambda>0$, there exists
constant $A_\lambda$, such that
$$
\mathbb P\left(\sup_{x\in\mathcal X}\mu\left(\hat C(x)\triangle C_{\rm P}(x)\right)\ge A_\lambda
r_n^{\gamma_1}\right)=O(n^{-\lambda}),
$$
where $\gamma_1=\min(1,\gamma)$.
\end{theorem}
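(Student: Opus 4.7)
The plan is to reduce the analysis of $\hat C(x)$ to that of two plug-in conditional density level sets via the Sandwich Lemma (Lemma~\ref{lem:sandwich}), and then control the deviation of each plug-in set from the conditional oracle $C_P(x)=L_x(t_x^{(\alpha)})$. The Sandwich Lemma replaces $\hat C(x)$ by $\hat L_k(\hat t_k)$ and $\hat L_k(\hat t_k - \eta_n)$ with $\hat t_k=\hat p(Y_{(k,\alpha)}|A_k)$ and $\eta_n=(n_k h_n)^{-1}\psi_K$, so it suffices to show both $\mu(\hat L_k(\hat t_k)\triangle C_P(x))$ and $\mu(\hat L_k(\hat t_k-\eta_n)\triangle C_P(x))$ are $O(r_n^{\gamma_1})$ uniformly in $x$, with the desired $O(n^{-\lambda})$ failure probability.

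The first main step is a high-probability uniform density bound:
$$\sup_k \sup_{x\in A_k}\sup_y |\hat p(y|A_k)-p(y|x)| = O(r_n).$$
This decomposes as (i)~the Lipschitz error $\|p(\cdot|A_k)-p(\cdot|x)\|_\infty = O(w_n)$ from A1(d) and $\mathrm{diam}(A_k)\lesssim w_n$, (ii)~the kernel bias $O(h_n^\beta)$ from A1(b,c), and (iii)~the uniform variance $O(\sqrt{\log n/(n_k h_n)})$ obtained by a Bernstein / VC-type bound for kernel estimators together with $n_k\ge c n w_n^d$ (Bernstein, using A1(a)). The choice $w_n\asymp r_n$, $h_n\asymp r_n^{1/\beta}$ balances all three terms at $O(r_n)$, and a union bound over the $O(r_n^{-d})$ bins costs only additional $\log n$ factors absorbed by the $\log n$ in $r_n$. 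The second main step controls the cutoff. Writing $\hat P_k$ for the empirical distribution of $\{Y_i:X_i\in A_k\}$, Bernstein/DKW gives $\sup_B|\hat P_k(B)-P(B|A_k)|=O(\sqrt{\log n/n_k})$; combining with A1(d), Step~1, and the fact that upper level sets have measure $O(r_n^\gamma)$ in their $\delta$-boundary (see below) yields
$$|P_x(L_x(\hat t_k))-(1-\alpha)|=O(r_n^{\gamma_1}),\qquad \gamma_1=\min(1,\gamma),$$
and hence, by the lower inequality in A2(a), $|\hat t_k-t_x^{(\alpha)}|=O(r_n^{\gamma_1/\gamma})$.

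The third step converts these two approximations into a bound on the symmetric difference. I split
$$\hat L_k(\hat t_k)\triangle L_x(t_x^{(\alpha)})\subseteq \bigl(\hat L_k(\hat t_k)\triangle L_x(\hat t_k)\bigr)\cup\bigl(L_x(\hat t_k)\triangle L_x(t_x^{(\alpha)})\bigr).$$
For the first piece, Step~1 shows it is contained in $\{y:|p(y|x)-\hat t_k|\le O(r_n)\}$, whose measure is at most $t_0^{-1}$ times a $G_x$-increment, which by the upper inequality of A2(a) and A2(b) is $O(r_n^\gamma)\le O(r_n^{\gamma_1})$. For the second piece, the same argument gives $O(|\hat t_k-t_x^{(\alpha)}|^\gamma)=O(r_n^{\gamma_1})$. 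The extra cutoff shift $\eta_n$ in the Sandwich Lemma is $O(1/(n w_n^d h_n))=O(r_n^2/\log n)=o(r_n^{\gamma_1})$, so the upper sandwich set admits the same bound. The hardest part will be step~2: obtaining an $O(n^{-\lambda})$ tail that is uniform over the uncountable set of $x$'s. This is achieved by a union bound over the $O(r_n^{-d})$ bins combined with the Lipschitz continuity A1(d) to transfer control from the finite grid to all $x$; the need for an $O(\sqrt{\log n/n_k})$ variance bound (rather than $O(1/\sqrt{n_k})$) is exactly why $r_n$ carries a $\log n$ factor.
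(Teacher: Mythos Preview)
Your proposal is correct and follows essentially the same route as the paper: reduce via the Sandwich Lemma to plug-in level sets, establish the uniform bound $\sup_{k,x}\|\hat p(\cdot|A_k)-p(\cdot|x)\|_\infty=O(r_n)$ (the paper's Corollary~\ref{cor:R_n}), control the empirical cutoff via a VC/Bernstein bound on the nested level sets together with the Lipschitz transfer across $x\in A_k$ (the paper's Lemmas~\ref{lem:V_n}--\ref{lem:Lip_local_measure}), and convert both into a symmetric-difference bound using A2(a,b). The only cosmetic difference is that the paper packages your Step~3 into a single generic lemma (Lemma~\ref{lem:generic_argument}) which obtains $\hat L_k(\tilde t)\triangle L_x(t_x^{(\alpha)})\subseteq\{y:|p(y|x)-t_x^{(\alpha)}|\le 2\nu_1+c_1^{-1/\gamma}\nu_2^{1/\gamma}+\nu_3\}$ in one stroke rather than via your two-piece split, thereby applying A2(a) only at the known level $t_x^{(\alpha)}$ and sidestepping any circularity in invoking the $\gamma$-exponent condition near the random level $\hat t_k$.
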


Thus, in the common case $\gamma=1$, the rate is $r_n$.
The following lemma follows easily from the previous result.

\begin{lemma}
Under assumptions A1 and A2,
the local band is asymptotically conditionally valid.
\end{lemma}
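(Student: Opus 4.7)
The plan is to combine Theorem \ref{thm:main_consist}, which gives uniform convergence of the symmetric difference $\mu(\hat C(x)\triangle C_P(x))$ to zero, with the uniform boundedness of the conditional density provided by Assumption A1(c). This is precisely the shortcut already hinted at in the paragraph following equation (\ref{eq:asymp_cond_valid}): under a uniform bound on $p(y|x)$, efficiency implies asymptotic conditional validity.

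First I would write the pointwise conditional miscoverage as
\[
V_n(x):=\mathbb P\bigl(Y_{n+1}\notin \hat C(x)\,\big|\,X_{n+1}=x, D_n\bigr)=\int_{\hat C(x)^c} p(y|x)\,dy,
\]
where $D_n$ denotes the training sample (so $V_n(x)$ is a random variable through $\hat C$). By definition of the oracle band $C_P(x)$ in (\ref{eq:cond_oracle}), $\int_{C_P(x)} p(y|x)\,dy=1-\alpha$, hence
\[
V_n(x)-\alpha \;=\; \int_{C_P(x)} p(y|x)\,dy-\int_{\hat C(x)} p(y|x)\,dy.
\]

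Next, I would bound the right-hand side by $\int_{\hat C(x)\triangle C_P(x)} p(y|x)\,dy$ using a standard set-difference identity, and then invoke Assumption A1(c) with $s=0$ to get $p(y|x)\le L$ uniformly in $(x,y)$. This yields
\[
\bigl[V_n(x)-\alpha\bigr]_+ \;\le\; L\cdot\mu\bigl(\hat C(x)\triangle C_P(x)\bigr).
\]
Taking the supremum over $x\in\mathcal X$ and applying Theorem \ref{thm:main_consist} with any $\lambda>0$ gives
\[
\mathbb P\Bigl(\sup_{x}\bigl[V_n(x)-\alpha\bigr]_+\ge L A_\lambda r_n^{\gamma_1}\Bigr)=O(n^{-\lambda}),
\]
and since $r_n\to 0$, this establishes $\sup_x[V_n(x)-\alpha]_+\stackrel{P}{\to}0$, which is the definition (\ref{eq:asymp_cond_valid}) of asymptotic conditional validity.

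There is no real obstacle here; the only subtlety is making sure the supremum over $x$ is handled correctly, but uniformity is already built into Theorem \ref{thm:main_consist} and the uniform density bound from A1(c), so no extra argument is needed. The proof is essentially a one-line corollary of the efficiency theorem together with the boundedness assumption, and can be stated in just a few lines.
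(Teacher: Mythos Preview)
Your proposal is correct and matches the paper's intended argument: the paper does not spell out a proof but simply states that the lemma ``follows easily from the previous result,'' and earlier (just after equation~(\ref{eq:asymp_cond_valid})) already notes that a uniform bound on $p(y|x)$ makes asymptotic conditional validity a consequence of asymptotic efficiency. Your write-up is precisely this observation made explicit, using Theorem~\ref{thm:main_consist} together with A1(c) at $s=0$.
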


\begin{remark}
  It follows from the proof that the output of Algorithm
  2 also satisfies the same asymptotic efficiency and
  conditional validity results.
\end{remark}
\subsection{Minimax Bound}

The next theorem says that in the most common case $\gamma=1$, the rate
given in Theorem \ref{thm:main_consist} is indeed minimax rate
optimal.
We define the minimax risk by
\begin{equation}
\inf_{\hat C \in {\cal C}_{n,\alpha}}\sup_{P\in\mathcal P(\beta,L)}
\mathbb  E_P\mu\left[ \hat C(x)\triangle C(x)\right]
\end{equation}
where ${\cal C}_{n,\alpha}$
is the set of all valid prediction sets, and
$\mathcal P(\beta,L)$ is the class of distributions
satisfying A1 and A2 with $\gamma=1$.
We can obtain a lower bound on the
minimax risk by taking the infimum over all
set estimators $\hat C$, as in the following result.

\begin{theorem}[Lower bound on estimation error]
\label{thm:lowerbound}
Let $\mathcal P(\beta,L)$
be the class of distributions on
$[0,1]^d\times \mathbb R^1$ such that for each
$P\in\mathcal P(\beta,L)$,
$P_X$ is uniform on $[0,1]^d$, and satisfies
Assumptions A1-A2 with $\gamma=1$.
Fix an $\alpha\in(0,1)$,
there exist constant $c=c(\alpha,\beta, L,d)>0$ such that
$$
\inf_{\hat C}\sup_{P\in\mathcal P(\beta,L)}
\mathbb  E_P\mu\left[ \hat C(x)\triangle C(x)
\right]\ge c r_n.
$$
\end{theorem}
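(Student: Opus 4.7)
The plan is to establish the minimax lower bound by the standard Fano-style reduction from set estimation to multiple hypothesis testing. I will exhibit a packing $\{P_1,\ldots,P_M\}\subset\mathcal{P}(\beta,L)$ whose oracle bands are pairwise separated in symmetric difference by an amount of order $r_n$ (at some $x$ depending on the pair), while the pairwise $n$-fold Kullback-Leibler divergences $\mathrm{KL}(P_j^n,P_k^n)$ remain bounded by $c\log M$ for some $c<1$. Fano's inequality then forces any set estimator $\hat C$ to incur worst-case error at least a constant times $r_n$, giving the claim.

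For the construction I take $P_X$ uniform on $[0,1]^d$ and a fixed smooth base conditional density $p_0(y|x)=\psi_0(y)$ independent of $x$, chosen so that $\psi_0(y)=t^{(\alpha)}$ has finitely many roots at which $\psi_0'\neq 0$, which automatically gives Assumption A2(a) with $\gamma=1$ for the base. Partition $[0,1]^d$ into $M=w_n^{-d}$ cubes $A_j$ of side $w_n$ with centers $x_j$, pick a contour point $y^*$, and set
\[
  p_j(y|x)\;=\;\psi_0(y)\;+\;A\,\phi\!\left(\frac{x-x_j}{w_n}\right)\xi\!\left(\frac{y-y^*}{h_n}\right),
\]
where $\phi$ is a smooth nonnegative bump supported on the unit cube in $\mathbb R^d$, $\xi$ is a smooth compactly supported function with $\int\xi=0$ (so $\int p_j(y|x)\,dy=1$), and $A$, $w_n$, $h_n$ are parameters to be balanced. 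Because the perturbation vanishes outside $A_j$, the bands $C_{P_j}(x)$ and $C_{P_k}(x)$ coincide outside $A_j\cup A_k$, and for $x$ in the core of $A_j$ the contour of $p_j(\cdot|x)$ shifts in $y$ by an amount of order $A/|\psi_0'(y^*)|$, so $\mu[C_{P_j}(x)\triangle C_{P_k}(x)]\gtrsim A$ for some $x\in A_j$.

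Regularity forces $A\lesssim h_n^\beta$ (Hölder-$\beta$ seminorm in $y$) and $A\lesssim w_n$ (Lipschitz in $x$), so I take $A\asymp h_n^\beta\asymp w_n$, i.e.\ $w_n\asymp h_n^\beta$. A standard $\chi^2$-to-KL comparison, using that $\psi_0$ is bounded below, yields
\[
  \mathrm{KL}(P_j^n,P_k^n)\;\lesssim\;n A^2 w_n^d h_n\;\asymp\;n h_n^{\beta(d+2)+1},
\]
while $\log M\asymp d\log(1/w_n)\asymp\log(1/h_n)\asymp\log n$. Equating these as $n h_n^{\beta(d+2)+1}\asymp\log n$ gives $h_n\asymp(\log n/n)^{1/(\beta(d+2)+1)}$, and hence $A\asymp r_n$. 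Fano's inequality then delivers the lower bound.

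The main technical obstacle is the simultaneous verification of Assumptions A1 and A2 for the entire perturbed family with a common constant $L$, and especially Assumption A2(a) with $\gamma=1$: I must confirm that each $p_j(\cdot|x)$ still crosses its adjusted cutoff $t^{(\alpha)}_{x,j}$ with slope bounded above and away from zero uniformly in $j$ and $x$, and that the shift of the contour is genuinely of order $A$ rather than cancelled by a self-compensating adjustment of $t^{(\alpha)}_{x,j}$. Because $\xi$ has zero mean, the cutoff adjustment is of higher order in $A$, so the primary $\Theta(A)$ contour shift survives; making this precise, together with bookkeeping the Hölder constants for all derivative orders up to $\lfloor\beta\rfloor$, is the most delicate part of the argument.
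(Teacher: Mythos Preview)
Your overall strategy---Fano's inequality with hypotheses $\{P_j\}$ indexed by cubes $A_j\subset[0,1]^d$ of side $w_n$, each obtained by locally perturbing a fixed base conditional density near a contour point---is exactly the paper's route, and your calibration $A\asymp h_n^\beta\asymp w_n$, leading to $A\asymp r_n$, agrees with the paper's choice $h=w\asymp r_n$ after the identification $A\leftrightarrow h$, $h_n\leftrightarrow h^{1/\beta}$.

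The gap is in your handling of the cutoff $t_x^{(\alpha,j)}$. You write that ``because $\xi$ has zero mean, the cutoff adjustment is of higher order in $A$.'' That is not what $\int\xi=0$ buys: it only ensures $p_j(\cdot\,|x)$ integrates to $1$. Writing $g(y)=\phi(0)\xi((y-y^*)/h_n)$ and $p_j=p_0+Ag$, one has
\[
G_j(t_0)-G_0(t_0)=\int\bigl[\ind(p_j<t_0)-\ind(p_0<t_0)\bigr]\,p_0\,dy+A\int\ind(p_j<t_0)\,g\,dy.
\]
The first integral is supported on the set where the perturbation flips the sign of $p-t_0$, a set of measure $\Theta(A)$ on which $p_0\approx t_0$, so it contributes $\Theta(A)$; the second is $O(Ah_n)$. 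Hence $t_j-t_0=\Theta(A)$ in general, the \emph{same} order as the intended contour displacement, and with a single bump there is nothing preventing cancellation. The paper handles this by a more structured construction: it takes $p_0$ \emph{symmetric} and places two bumps of opposite sign at the symmetric contour points $\pm y_0$. By that antisymmetry the $\Theta(h)$ contributions to $G_j(t_0)-G_0(t_0)$ from the two boundaries cancel exactly, leaving a cutoff shift of order $h^{1+1/\beta}=o(h)$, so the $\Theta(h)$ symmetric-difference separation survives.

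Your single-bump construction can still be repaired, but not via $\int\xi=0$. Since $\psi_0$ must have at least one other contour point $y^{**}$ outside the support of $\xi$, argue by dichotomy: if $|t_j-t_0|\ge cA$ then the unperturbed boundary at $y^{**}$ moves by $|t_j-t_0|/|\psi_0'(y^{**})|\gtrsim A$; if $|t_j-t_0|<cA$ then the $\Theta(A)$ shift at $y^*$ is not undone. Either way $\mu[C_{P_j}(x_j)\triangle C_{P_0}(x_j)]\gtrsim A$. Make this explicit, or adopt the paper's symmetric/antisymmetric device.
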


Hence, our procedure achieves the same rate as the
lower bound and so is minimax rate optimal
over the class
$\mathcal P(\beta,L)$.
The proof of
Theorem \ref{thm:lowerbound}
is in Section \ref{subsec::lower-proof}
and uses a somewhat non-standard construction.

\section{Tuning Parameter Selection}\label{sec:bandwidth}

In the band given by (\ref{eq:loc_val_band}),
there are two bandwidths to choose: $w_n$ and $h_n$.  Note that
since each bin $A_k$ can use a different $h_n$ to estimate the
local marginal density $\hat p(\cdot|A_k)$, we can consider
$h_{n,k}$, allowing a different kernel bandwidth for each bin.

Since all bandwidths give local validity, one can choose the
combination
of $(w_n,h_{n,k})$ such that the resulting conformal set
has smallest Lesbesgue measure.
 Such a two-stage
procedure
of selecting $w_n$ and $h_{n,k}$
from discrete candidate sets $\mathcal W=\{w^1,...,w^m\}$
and $\mathcal H=\{h^1,...,h^\ell\}$ is detailed in Algorithm 3.
To preserve finite sample marginal validity with data-driven bandwidths,
we split the sample into two equal-sized subsamples, and
apply the tuning algorithm on one subsample and use
the output bandwidth on the other subsample to obtain
the prediction band.

\begin{figure}
\begin{center}
\fbox{\parbox{6.3 in}{
\begin{center}{\sf Algorithm 3: Bandwidth Tuning for COPS}\end{center}
Input: Data $\mathcal Z$, level $\alpha$, candidate sets
 $\mathcal W$, $\mathcal H$.
\begin{enum}
\item Split data set into two equal sized subsamples,
$\mathcal Z_1$, $\mathcal Z_2$.
\item For each $w\in\mathcal W$
\begin{enum}
\item Construct partition $\mathcal A^w$.
\item For each $k$ and $h$ construct local conformal
prediction set
$\hat C_{h,k}^1$,
each at level $1-\alpha$, using data $\mathcal Z_1$.
\item Let $h^*_{w, k}=\arg\min_{h\in \mathcal H}
\mu\big(\hat C_{h,k}^1\big)$, for all $k$.
\item Let $Q(w)=\frac{1}{n}\sum_k n_k \mu\big(\hat C^1_{h^*_{w,
    k},k}\big)$.
\end{enum}
\item Choose $\hat w = \arg\min Q(w)$;
$\hat h_{\hat
    w,k}=h^*_{\hat w, k}$.
\item Construct partition $\mathcal A^{\hat w}$.
For $x\in A_k$, output prediction band $\hat C(x)=
\hat C_{\hat h_{\hat w,k},k}^2$, where
$\hat C_{h,k}^2$ is the local conformal prediction set
estimated from data $\mathcal Z_2$ in local set $A_k$.
\end{enum}
}}
\end{center}
\end{figure}

Following Remark \ref{rem:flexible}, one
can use different
conformity measures to construct $\hat C$.
In principle, the above
sample splitting procedure works for any
conformity measures.

It is straightforward to show that
the band $\hat C$ constructed as above
using data-driven tuning parameters
is locally valid and marginally valid, because
the bandwidth $(w,h)$ used is independent of
the training data $\mathcal Z_2$.
From the construction of $\hat C$,
it will have small excess risk
if the conformal prediction set is
stable under random sampling.  Then asymptotic
efficiency follows if one can relate the excess
risk to the symmetric difference risk.
A rigorous argument is beyond the scope of this
paper and will be pursued in a separate paper.

\section{Data Examples}
\label{sec::examples}

In this section we apply our method to
some examples.

\subsection{A Synthetic Example}

The procedure is illustrated by the following example in which
$d=1$, and
 \begin{equation}\label{eq:illustration_model}
\begin{array}{rl}
X\sim& {\sf Unif}[-1.5,1.5]\,,\\
(Y|X=x)\sim& 0.5N\left[f(x)-g(x),\sigma^2(x)\right]+
0.5N\left[f(x)+g(x),\sigma^2(x)\right]\,,
\end{array}\end{equation}
where
\begin{align*}
  f(x)=&(x-1)^2(x+1),\\
  g(x)=&2\sqrt{x+0.5}\times\ind(x\ge -0.5),\\
  \sigma^2(x)=&1/4+|x|.
\end{align*}
  For $x\le -0.5$, $(Y|X=x)$ is a
Gaussian centered at $f(x)$
with varying variance $\sigma^2(x)$. For
$x\ge -0.5$, $(Y|X=x)$ is a two-component Gaussian mixture,
and for large values of $x$, the two components have little
overlap.

\begin{figure}
\begin{center}
\includegraphics[scale=0.9]{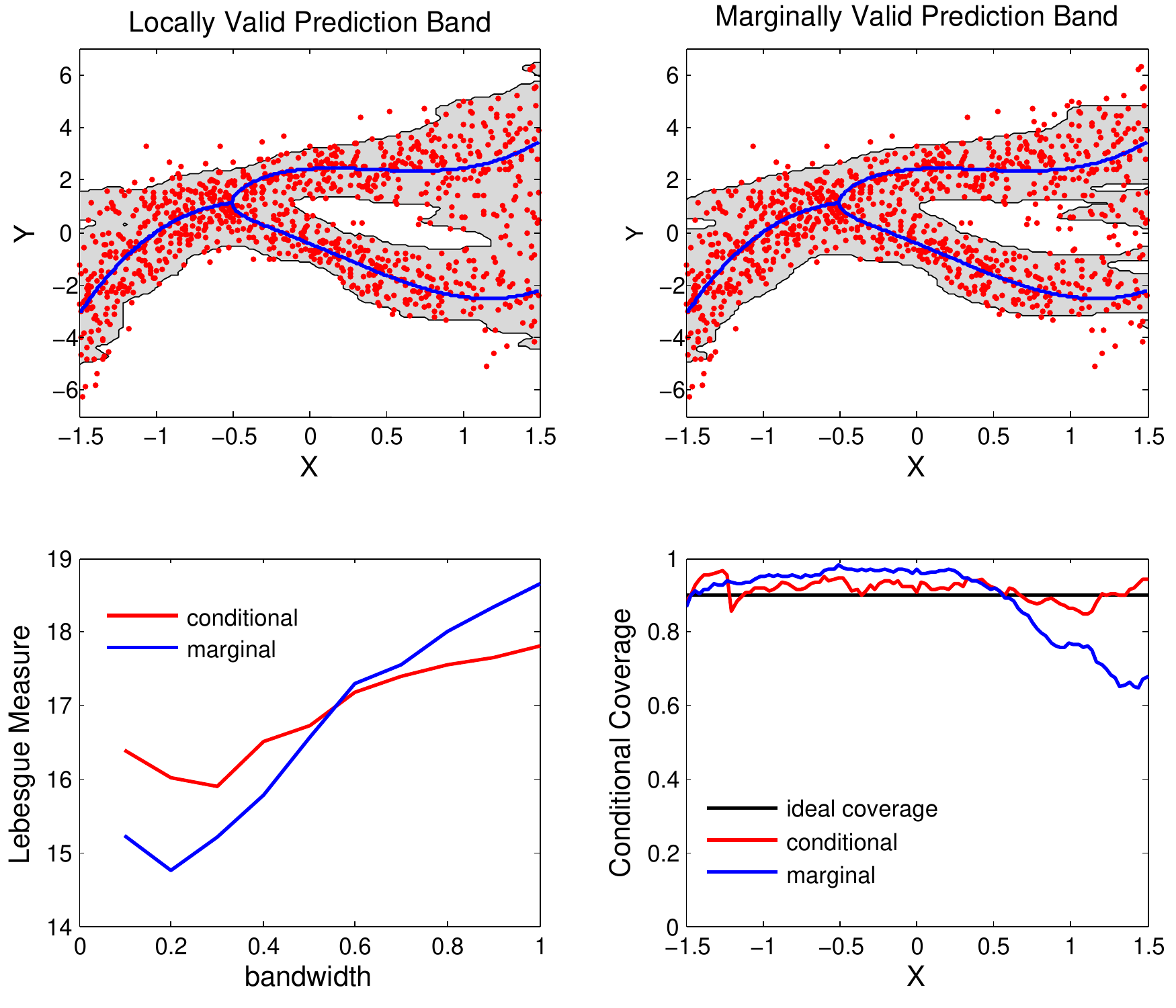}
\end{center}
\caption{Conditional and marginal prediction bands.
The bottom left panel shows the relationship between bandwidth and
Lebesgue measure of the prediction band. The bottom right panel
shows the conditional coverage of the estimated set $\hat C(x)$
as a function of $x$.}
\label{fig:cond_vs_marg}
\end{figure}

The performance of prediction bands using local conformity
is plotted
and compared with the marginal valid band in Figure \ref{fig:cond_vs_marg},
with $n=1000$, $\alpha=0.1$.
The conformity measure used here is $\hat p^{(x,y)}(Y_i|X_i)$.
The locally valid prediction band is constructed by partitioning
the support of $P_X$ into 10 equal sized bins, whereas the
marginally valid band is constructed by a global ranking with
the same conformity measure.
We see that although the locally valid band has larger Lebesgue measure,
it gives the desired coverage for all values $x$.
The marginally valid band over covers for smaller values of $x$,
and under covers for larger values of $x$.
We also plot the effect of bandwidth on the
size of prediction set (lower left panel of Figure \ref{fig:cond_vs_marg}).

\subsection{Car Data}

Next we consider an example on car mileage.
The original data contains features for
about 400 cars.  For each car, the data
consist of miles per gallon, horse power,
engine displacement, size, acceleration,
number of cylinders, model year, origin of
manufacture.  These data have been used in
statistics text books (for example,
\citet{DeGrootS12}, Chapter 11) to illustrate
the art of linear regression analysis.  Here
we reproduce the linear model built in
Example 11.3.2 of \cite{DeGrootS12},
where we want to predict the miles per gallon
by the horse power.  Clearly, the relationship
between miles per gallon and horse power is far from
linear (Figure \ref{fig:car}) so some transformation
must be applied prior to linear model fitting.
It makes sense to assume, both from intuition
and data plots,
that the inverse of miles per gallon, namely, gallons
per mile, has roughly a linear dependence on the
horse power.

In the right panel of Figure \ref{fig:car}
we plot the level 0.9 prediction band obtained from
the linear regression prediction band.  The
overall coverage is reasonably close to the nominal level.
However, due to the non-uniform noise level, the band
is too wide for small values of horse power and
too narrow for large values.
In the left panel, we plot the nonparametric conformal prediction
band using conformity measure $\hat p_h^{y}(Y_i|X_i)$ to
enhance smoothness of the estimated band.  Such a band is asymptotically
close to the one given in (\ref{eq:loc_val_band}).  The bandwidths are
$h_x=14$ and $h_y=1.4$.  The partition $\mathcal A$ is constructed by
partitioning the range of horse power into several intervals
to ensure each set $A_k$ contains roughly same number of sample points.
Here the tuning parameter is the number of partitions and is set to 8.

The advantage of our method is clear.  First, it automatically
outputs good prediction bands without involving choosing
the variable transformation.  The tuning parameters can be
chosen in either an automated procedure as described in
Algorithm 3, or by conventional choices (kernel bandwidth
selectors).  Second, the conformal prediction band is truly
distribution free, with valid coverage for all distribution and all
sample sizes.

\begin{figure}
\begin{center}
\includegraphics[scale=0.65]{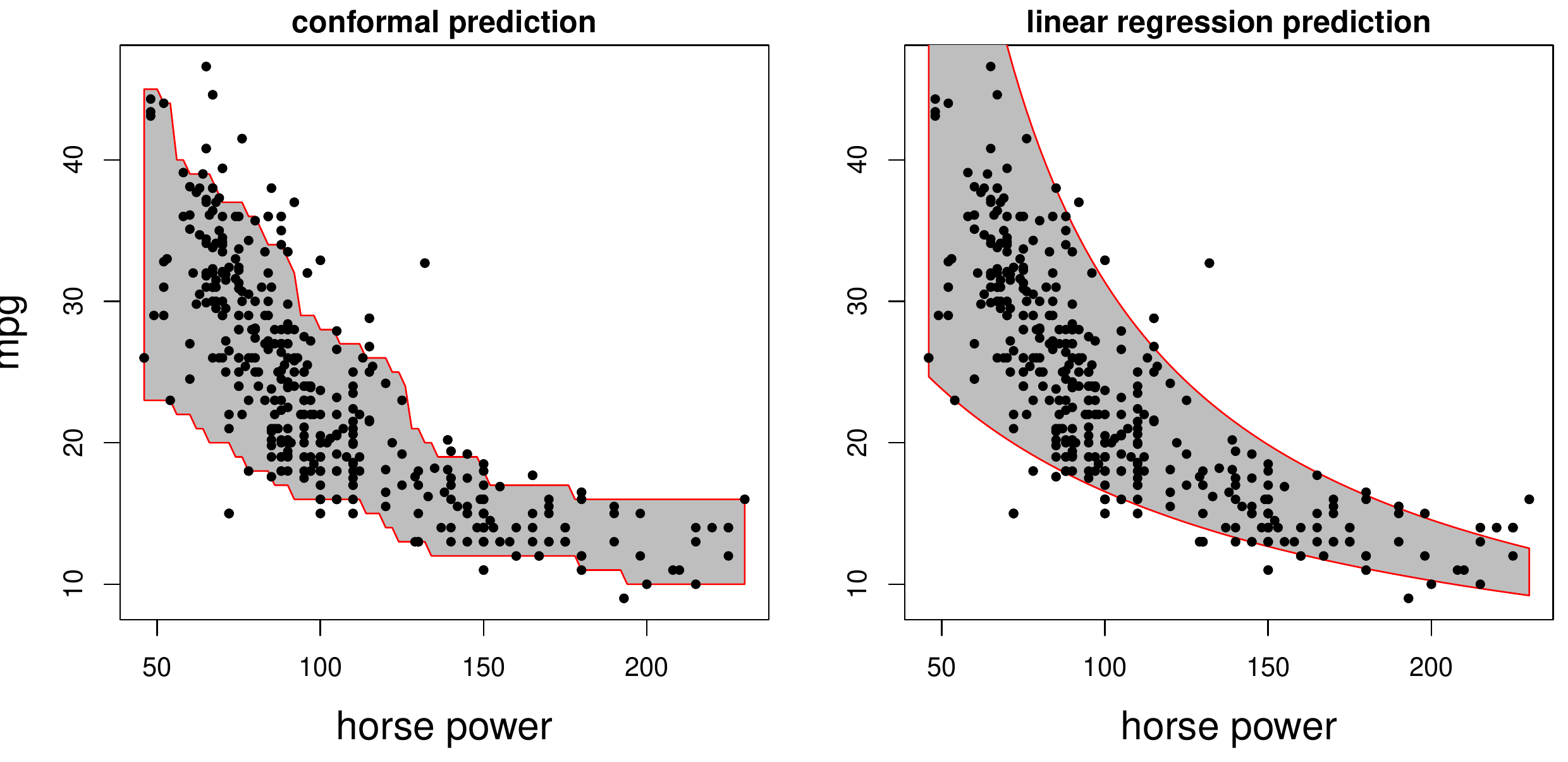}
\end{center}
\caption{Level $.9$ prediction bands using local conformal prediction (left)
and linear regression with variable transformation (right).}
\label{fig:car}
\end{figure}

\section{Final Remarks}
\label{sec::conclusion}

We have constructed nonparametric prediction bands
with finite sample, distribution free validity.
With regularity assumptions,
the band is efficient in the sense of achieving
the minimax bound.
The tuning parameters are completely data-driven.
We believe this is the first prediction band with these properties.

An important open question is to establish a rigorous
result on the asymptotic efficiency for the data-driven bandwidth.
A sketch of such an argument can be given by combining
two facts. First, the empirical average
excess loss $n^{-1}\sum_{k}n_k \mu(\hat C_{h,k})$
is a good approximation to the excess risk
$\E\left[\int\mu(\hat C_{h,k}(x))P_X(dx)\right]$
for all $w$
and $h$.  This problem is technically similar to
those considered by \cite{RinaldoSNW10} in
the study of stability of plug-in density level
sets and prediction sets.
Second, one can show that the excess risk
provides an upper bound of the symmetric difference
risk $\E (\hat C\triangle C_P)$, as
given in \cite{LeiRW11} (see also
\cite{ScottN06}).

The bands are not suitable for high-dimensional regression problems.
In current work, we are developing methods for constructing prediction bands
that exploit sparsity assumptions.
These will yield valid prediction and variable selection
simultaneously.

\section{Appendix}

In the appendix, we give supplementary technical details.

\subsection{Technical Definitions}\label{app:def}

Now we give formal definitions of some technical terms
used in the asymptotic analysis, including H\"{o}lder class
of functions and valid kernel functions of order $\beta$. These
definitions can be found in standard nonparametric inference
text books such as \cite[Section 1.2]{Tsybakov09}.
\begin{definition}[H\"{o}lder Class]
Given $L>0$, $\beta>0$.  Let $\ell=\lfloor\beta\rfloor$.
The H\"{o}lder class $\Sigma(\beta, L)$ is the family of functions
$f:\mathbb R^1\mapsto \mathbb R^1$ whose derivative $f^{(\ell)}$
satisfies
$$|f^{(\ell)}(x)-f^{(\ell)}(x')|\le L|x-x'|^{\beta-\ell},
~\forall~x,x'.$$
\end{definition}
\textbf{Remark:} If $f\in \Sigma(\beta,L)$, then
$f$ can be uniformly approximated by its local polynomials of
order $\ell$.  Define
$$f_{x_0}^{(\ell)}(x)=f(x_0)+f'(x_0)(x-x_0)+....+
\frac{f^{(\ell)}(x_0)}{\ell!}(x-x_0)^\ell.
$$
Then
$$|f(x)-f_{x_0}^{(\ell)}(x)|\le \frac{L}{\ell !}|x-x_0|^\beta.$$

\begin{definition}
  [Valid Kernels of Order $\beta$]
  Let $\beta>0$ and $\ell=\lfloor\beta\rfloor$. Say that $K:\mathbb R^1\mapsto
  \mathbb R^1$ is a valid kernel of order $\beta$ if the functions
  $u\mapsto u^jK(u)$, $j=1,...,\ell$, are integrable and satisfy
  $$\int K(u)du=1,~~\int u^jK(u)du=0,~j=1,...,\ell.$$
\end{definition}
\textbf{Remark:} The relationship between a H\"{o}lder
class $\Sigma(\beta,L)$ and a valid kernel $K$ of order
$\beta$ is that for any $p\in\Sigma(\beta,L)$, and $h=o(1)$,
$||p-p\star K_h||_\infty\le \frac{L}{\ell!}h^\beta$, where
$\star$ is the convolution operator and $K_h(x)=h^{-1}K(x/h)$.
\subsection{Proof of Lemma \ref{lem:impossible}}
\label{app:pf_impossible}
\begin{proof}[Proof of Lemma \ref{lem:impossible}]
For simplicity we prove the case where $d=1$.
Let
$$
{\sf TV}(P,Q) = \sup_A |P(A)-Q(A)|
$$
denote the total variation distance between $P$ and $Q$.
Given any $\epsilon>0$ define
$$
\epsilon_n = 2[1-(1-\epsilon^2/8)^{1/n}].
$$
From Lemma A.1 of \cite{Donoho1988},
if
${\sf TV}(P,Q) \leq \epsilon_n$ then
${\sf TV}(P^{n},Q^n) \leq \epsilon$.

Fix $\epsilon>0$.
Let $x_0$ be a non-atom
and choose $\delta$ be such that
$0 < P_X\left[B(x_0,\delta)\right] < \epsilon_n$
where $\epsilon_n = 2[1-(1-\epsilon^2/8)^{1/n}]$.
It follows that
${\sf TV}(P^{n},Q^{n}) \leq \epsilon$.

Fix $B>0$ and
let $B_0=B/(2(1-\alpha))$.
Define another distribution $Q$ by
$$
Q(A) = P(A\cap S^c) + U(A \cap S)
$$
where
$$
S = \Bigl\{ (x,y) :\ x\in B(x_0,\delta),\ y\in \mathbb{R}\Bigr\}
$$
and $U$ has total mass $P(S)$ and is uniform on
$\{ (x,y) :\ x\in B(x_0,\delta),\ |y| < B_0\}$.
Note that $P(S)>0$, $Q(S)>0$ and
${\sf TV}(P,Q) \leq \epsilon_n$.
It follows that
${\sf TV}(P^{n},Q^{n}) \leq \epsilon$.

Note that, for all
$x\in B(x_0,\delta)$,
$\int_{C(x)} q(y|x) dy \geq 1-\alpha$ implies that
$\mu[C(x)] \geq 2(1-\alpha)B_0 = B$.
Hence,
$$
Q^{n}\Biggl(\esssup_{x\in B(x_0,\delta)}\mu[C(x)]\ge B\Biggr) =1.
$$
Thus,
$$
P^{n}\Biggl(\esssup_{x\in B(x_0,\delta)}\mu[C(x)]\ge B\Biggr) \geq
Q^{n}\Biggl(\esssup_{x\in B(x_0,\delta)}\mu[C(x)]\ge B\Biggr)
-\epsilon =1 - \epsilon.
$$
Since $\epsilon$ and $B$ are arbitrary,
the result follows.
\end{proof}

\subsection{Proofs of asymptotic efficiency}
\begin{lemma}
  \label{lem:p_k_infty}
  Given $\lambda >0$, under condition A2
  and
  A4, there exists
  numerical constant $\xi_\lambda$ such
  that,
  $$\mathbb P\left(\sup_{k}
  ||\hat p(\cdot|A_k)
  -p(\cdot|A_k)||_\infty
  \ge \xi_\lambda r_n\right)=O(n^{-\lambda}).$$
\end{lemma}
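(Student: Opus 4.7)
The plan is the classical bias–variance decomposition for kernel density estimators, extended to hold uniformly over the bins $A_k$. Writing $\hat p(y|A_k) = \frac{1}{n_k h_n}\sum_{i\in I_k} K((Y_i-y)/h_n)$ with $I_k=\{i:X_i\in A_k\}$, I will bound
$$\|\hat p(\cdot|A_k)-p(\cdot|A_k)\|_\infty \le \underbrace{\|\E[\hat p(\cdot|A_k)\mid I_k]-p(\cdot|A_k)\|_\infty}_{\text{bias}} + \underbrace{\|\hat p(\cdot|A_k)-\E[\hat p(\cdot|A_k)\mid I_k]\|_\infty}_{\text{stochastic}}$$
and show each term is $O(r_n)$ uniformly in $k$ with the required probability.

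\textbf{Step 1: Uniform control of bin counts.} By A1(a), $\E n_k = n\int_{A_k} p_X\in[np_0 w_n^d,\, np_1 w_n^d]$. With the choice $w_n\asymp r_n$, one checks that $nw_n^d$ grows polynomially in $n$, so $nw_n^d\gg \log n$. A Bernstein/Chernoff bound combined with a union bound over the $m\asymp w_n^{-d}\lesssim n$ bins gives, for any $\lambda$,
$$\mathbb P\Bigl(\forall k:\ \tfrac{1}{2}np_0 w_n^d\le n_k\le 2np_1 w_n^d\Bigr)=1-O(n^{-\lambda}).$$
From here on, I condition on this good event.

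\textbf{Step 2: Bias bound.} The density $p(y|A_k)=P_X(A_k)^{-1}\int_{A_k}p(y|x)p_X(x)\,dx$ inherits the Hölder smoothness in $y$ from A1(b,c): it lies in $\Sigma(\beta,L)$ uniformly in $k$. Conditioned on $I_k$, the $Y_i$ $(i\in I_k)$ are i.i.d.~from $p(\cdot|A_k)$, so $\E[\hat p(y|A_k)\mid I_k]=(p(\cdot|A_k)\star K_{h_n})(y)$. Since $K$ is a valid kernel of order $\beta$, standard Taylor expansion yields $\|p(\cdot|A_k)\star K_{h_n}-p(\cdot|A_k)\|_\infty\le C h_n^\beta$. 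With $h_n\asymp r_n^{1/\beta}$ the bias is $O(r_n)$ uniformly in $k$.

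\textbf{Step 3: Uniform stochastic bound.} The main obstacle is controlling $\|\hat p(\cdot|A_k)-\E \hat p(\cdot|A_k)\|_\infty$ jointly in $y$ and $k$ at a polynomial tail rate. I would invoke a Talagrand-type uniform inequality for the VC-subgraph class $\{K((\cdot-y)/h_n):y\in\mathbb R\}$ (e.g., the Einmahl–Mason uniform-in-bandwidth result or Giné–Guillou), which gives, conditional on $I_k$ with $|I_k|=n_k$,
$$\mathbb P\Bigl(\sup_y|\hat p(y|A_k)-\E\hat p(y|A_k)|\ge C_1\sqrt{t/(n_k h_n)}+C_2 t/(n_k h_n)\,\Big|\,I_k\Bigr)\le e^{-t}.$$
Setting $t=(\lambda+d+2)\log n$ and using $n_k h_n\gtrsim n r_n^{d+1/\beta}$ from Step 1, the critical-rate equation $r_n^2\asymp \log n/(nr_n^{d+1/\beta})$ gives the bound $O(r_n)$ with probability $1-n^{-\lambda-d-2}$. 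A union bound over the $m\lesssim n^d$ bins then yields $1-O(n^{-\lambda})$.

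\textbf{Combining.} On the intersection of the good events from Steps 1 and 3, the bias ($O(r_n)$) and the stochastic term ($O(r_n)$) add to give $\sup_k\|\hat p(\cdot|A_k)-p(\cdot|A_k)\|_\infty\le \xi_\lambda r_n$ for a constant $\xi_\lambda$ depending on $\lambda$, $\beta$, $L$, $p_0$, $p_1$, $d$, and the kernel. The bottleneck is Step 3: verifying the VC/uniform-entropy hypothesis for the kernel class and obtaining the Bernstein-type exponential tail with sharp constants that survive the union bound over $m\lesssim n^d$ bins. Once this Talagrand-style ingredient is in place, the remaining pieces are standard nonparametric calculations and the rate $r_n$ emerges precisely from balancing the smoothness bias $h_n^\beta$ against the variance $\log n/(n w_n^d h_n)$.
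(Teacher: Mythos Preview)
Your proposal is correct and follows essentially the same route as the paper: bias--variance decomposition, H\"older bias $O(h_n^\beta)$, a Gin\'e--Guillou/Talagrand uniform concentration bound for the stochastic term, Bernstein control of the bin counts $n_k$, and a union bound over the $w_n^{-d}$ bins. The only cosmetic difference is that the paper quotes the Gin\'e--Guillou tail directly in the form $\mathbb P(\|\hat p-\bar p\|_\infty\ge \xi\sqrt{\log n_k/(n_kh_n)})\le C_1 h_n^{C_2\xi^2}$ and tunes $\xi$, whereas you phrase it via a generic $e^{-t}$ Talagrand bound and tune $t$; both give the required polynomial tail after the union bound.
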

\begin{proof}
for any fixed $k$,
$Y_{i_1},...,Y_{i_{n_k}}$ is a random sample from
$P(y|A_k)$ conditioning on $n_k$.

Let $\bar p(y|A_k)$ be the convolution
density $p(\cdot|A_k)\star K_{h_n}(\cdot)$, then using a result
from \citet{GineG02}, there
exists numerical constants
$C_1$, $C_2$ and $\xi_0$ such that for all $\xi\ge \xi_0$,
\begin{equation}
\label{eq:Gine}\mathbb P\left(||\hat p(\cdot|A_k)-\bar p(\cdot|A_k)||_\infty
\ge \xi\sqrt{\log n_k / (n_k
h_n)}\right)\le C_1h_n^{C_2\xi^2}.\end{equation}
On the other hand, by H\"{o}lder condition of $p(y|x)$ and hence on $p(\cdot|A_k)$,
we have:
$$||\bar p(\cdot|A_k)-p(\cdot|A_k)||_\infty \le Lh_n^\beta.$$
Put together with union bound on all $A_k\in \mathcal A_n$
$$\mathbb P \left(\exists k:~||\hat p(\cdot|A_k)-p
(\cdot|A_k)||_\infty \ge \xi\sqrt{
\log n_k/(n_kh_n)}+Lh_n^\beta
\right)\le C_1h_n^{C_2\xi^2}w_n^{-d}.$$
Consider event $E_0$:
$$E_{0}=\{b_1nw_n^d/2\le n_k\le 3b_2nw_n^d/2\},$$
where the constants $b_1$ and $b_2$ is defined as
in Assumption A1(a).
By Lemma \ref{lem:local_sample_size} we have
$$\mathbb P(E_0^c)\le C_3w_n^{-d}\exp(-C_4nw_n^d),$$
with constants $C_3$, $C_4$ defined in lemma
\ref{lem:local_sample_size}.

On $E_0$ and for $n$ large enough
we have
$$\sqrt{\frac{\log n_k}{n_kh_n}}\le 2
\sqrt{\frac{2\beta+1}{c_1\left[\beta(d+2)+1\right]}}
\sqrt{\frac{\log n}{nw_n^dh_n}}.$$

Note that
under Assumption A4, $\sqrt{\frac{\log n}{nw_n^dh_n}}=h_n^\beta=r_n$.

Let
$$\xi_\lambda=2
\sqrt{\frac{2\beta+1}{c_1\left[\beta(d+2)+1\right]}}
\left(\sqrt{\frac{\lambda(\beta(d+2)+1)+\beta
d}{C_2}}\bigvee \xi_0\right)+L,$$
where the constant $c_1$ is defined in Assumption A1(a),
$C_2$ defined in equation (\ref{eq:Gine}), and $L$
defined in A2(a).

Then we have
\begin{align*}
&\mathbb P\left(\sup_k ||\hat p(\cdot|A_k)
-p(\cdot|A_k)||_\infty
\le \xi_\lambda r_n\right)\\
\ge&\mathbb P\left(\sup_k||\hat p(\cdot|A_k)
-p(\cdot|A_k)||_\infty\le
(\xi_\lambda-L)\sqrt{\frac{\log
n}{nw_n^dh_n}}+Lh_n^\beta,~E_0
\right)\\
\ge&\mathbb P\left(\sup_k
||\hat p(\cdot|A_k)-p(\cdot|A_k)||_\infty
\le \frac{\xi_\lambda-L}
{2\sqrt{\frac{2\beta+1}{c_1\left[\beta(d+2)+1\right]}}}
\sqrt{\frac{\log n_k}{n_kh_n}}+Lh_n^\beta,~E_0
\right)\\
\ge & 1-\mathbb P\left(\exists k:||\hat
p(\cdot|A_k)-p(\cdot|A_k)||_\infty\ge
\frac{\xi_\lambda-L}
{2\sqrt{\frac{2\beta+1}{c_1\left[\beta(d+2)+1\right]}}}
\sqrt{\frac{\log n_k}{n_kh_n}}+Lh_n^\beta\right)-\mathbb
P(E_0^c)\\
=&1-O(n^{-\lambda}),\end{align*}

\end{proof}

  \begin{corollary}\label{cor:R_n}
  Let $R_n(x)=||\hat p_n(y|A_k)-p(y|x)||_\infty$, then for
  any $\lambda >0 $, there exists $\xi_{1,\lambda} >0$ such that
  $$ \mathbb P\left[\sup_{x\in B_k}R_n(x)\ge
      \xi_{1,\lambda} r_n\right]=O(n^{-\lambda}).$$
  \end{corollary}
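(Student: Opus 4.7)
The plan is to reduce the claim to Lemma \ref{lem:p_k_infty} via a triangle inequality, splitting the estimation error $R_n(x)$ into a stochastic piece (how well $\hat p(\cdot|A_k)$ estimates the binwise conditional density $p(\cdot|A_k)$) and a deterministic piece (how well $p(\cdot|A_k)$ approximates the truly conditional density $p(\cdot|x)$ for $x \in A_k$).

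First I would write, for any $x \in A_k$,
\begin{equation*}
R_n(x) \;=\; \|\hat p(\cdot|A_k) - p(\cdot|x)\|_\infty
\;\le\; \|\hat p(\cdot|A_k) - p(\cdot|A_k)\|_\infty \;+\; \|p(\cdot|A_k) - p(\cdot|x)\|_\infty.
\end{equation*}
The first term on the right is controlled uniformly in $k$ by Lemma \ref{lem:p_k_infty}: with probability $1 - O(n^{-\lambda})$,
\begin{equation*}
\sup_k \|\hat p(\cdot|A_k) - p(\cdot|A_k)\|_\infty \le \xi_\lambda r_n.
\end{equation*}

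For the deterministic second term, I would use Assumption A1(d), the Lipschitz property of the conditional density in $x$. Writing $p(y|A_k) = \int p(y|x')\, p_X(x'|A_k)\, dx'$ where $p_X(\cdot|A_k)$ is the conditional law of $X$ given $X \in A_k$, Jensen (or just pointwise bounding the integrand) gives
\begin{equation*}
\|p(\cdot|A_k) - p(\cdot|x)\|_\infty \le \sup_{x' \in A_k} \|p(\cdot|x') - p(\cdot|x)\|_\infty \le L \cdot \mathrm{diam}(A_k) \le L\sqrt{d}\, w_n,
\end{equation*}
since each $A_k$ is a cube of side length $w_n$. By the choice $w_n \asymp r_n$ in Theorem \ref{thm:main_consist}, this is at most a constant multiple of $r_n$, and the bound is uniform over $x \in A_k$ and over $k$.

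Combining the two bounds on the event from Lemma \ref{lem:p_k_infty}, I would set
\begin{equation*}
\xi_{1,\lambda} \;=\; \xi_\lambda + L\sqrt{d}\,\frac{w_n}{r_n},
\end{equation*}
which is a finite constant since $w_n/r_n$ is bounded, and conclude
\begin{equation*}
\mathbb P\left[\sup_{x \in A_k} R_n(x) \ge \xi_{1,\lambda} r_n \right] \le \mathbb P\left[\sup_k \|\hat p(\cdot|A_k) - p(\cdot|A_k)\|_\infty \ge \xi_\lambda r_n\right] = O(n^{-\lambda}).
\end{equation*}
There is no real obstacle here beyond being careful that the Lipschitz bound uses the correct diameter and that the constant absorbs the geometric factor $\sqrt{d}$; the entire work is already in Lemma \ref{lem:p_k_infty}. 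The only modest subtlety is that the statement is uniform in $k$ (on the good event), which is automatic because Lemma \ref{lem:p_k_infty} already takes a supremum over bins.
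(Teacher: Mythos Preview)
Your proposal is correct and matches the paper's own proof essentially line for line: the paper also bounds $\|p(\cdot|A_k)-p(\cdot|x)\|_\infty\le \sqrt{d}Lw_n$ via the Lipschitz assumption, uses $w_n\asymp r_n$, and then invokes Lemma~\ref{lem:p_k_infty} with $\xi_{1,\lambda}=\xi_\lambda+\sqrt{d}L$. The only cosmetic difference is that the paper takes $w_n=r_n$ exactly, so the constant is cleaner than your $\xi_\lambda+L\sqrt{d}\,w_n/r_n$; you may want to replace that ratio by its supremum to make $\xi_{1,\lambda}$ genuinely independent of $n$.
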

  \begin{proof}
First by Lipschitz condition A2(c) on $p(y|x)$,
$$||p(y|A_k)-p(y|x)||_\infty\le \sqrt{d}Lw_n.$$

Note that $w_n=r_n$, the claim then follows by applying
Lemma \ref{lem:p_k_infty} and choose
$\xi_{1,\lambda}=\xi_\lambda+\sqrt{d}L$.
\end{proof}

\begin{lemma}\label{lem:V_n} Let
  $$V_n(x)=\sup_{t\ge t_0}\left|\hat
  P(L^\ell_x(t)|A_k)-P(L^\ell_x(t)|x)\right|,$$
then, for any $\lambda >0$, there exists $\xi_{2,\lambda}$ such that
  $$\mathbb P\left(\sup_{x\in \mathcal X}V_n(x)\ge
  \xi_{2,\lambda}r_n^{\gamma_1}\right)=O(n^{-\lambda}),$$
  with $\gamma_1=\min(\gamma, 1)$.
\end{lemma}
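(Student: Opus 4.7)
The plan is to decompose $V_n(x)\le V_n^{(1)}(x)+V_n^{(2)}(x)$, where $V_n^{(1)}(x)=\sup_{t\ge t_0}|\hat P(L_x^\ell(t)|A_k)-P(L_x^\ell(t)|A_k)|$ is an empirical-process piece and $V_n^{(2)}(x)=\sup_{t\ge t_0}|P(L_x^\ell(t)|A_k)-P(L_x^\ell(t)|x)|$ is a bias piece from replacing the conditional measure $P(\cdot|x)$ by its bin average $P(\cdot|A_k)$. I would bound each piece uniformly over $x$ and over the $\asymp w_n^{-d}$ bins, then union-bound, in parallel with the proof of Lemma \ref{lem:p_k_infty}.

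For $V_n^{(2)}$, Assumption A1(d) gives $\|p(\cdot|A_k)-p(\cdot|x)\|_\infty\le \sqrt d\,Lw_n$ when $x\in A_k$. The identity $\int[p(y|A_k)-p(y|x)]\,dy=0$ lets me rewrite the integral of the density difference over $L_x^\ell(t)$ as minus the same integral over the complementary upper level set $L_x(t)$; for $t\ge t_0$, $L_x(t)\subseteq L_x(t_0)$ has Lebesgue measure at most $C$ by A2(b), so $V_n^{(2)}(x)\le\sqrt d\,LCw_n=O(r_n)$ deterministically and uniformly in $x$ and $k$. Since $r_n<1$ and $\gamma_1\le 1$, this is dominated by $r_n^{\gamma_1}$.

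For $V_n^{(1)}$, fix a bin $A_k$ with center $\bar x_k$ and condition on the indices $\{i_j\}$ with $X_{i_j}\in A_k$, so the corresponding $Y_{i_j}$ are iid from $p(\cdot|A_k)$. Lipschitzness of $p(y|\cdot)$ yields the sandwich $L_{\bar x_k}^\ell(t-L\sqrt d w_n)\subseteq L_x^\ell(t)\subseteq L_{\bar x_k}^\ell(t+L\sqrt d w_n)$, reducing matters to uniform deviation over the nested family $\{L_{\bar x_k}^\ell(s):s\in\mathbb R\}$. This family is totally ordered by inclusion and so has VC dimension one, whence a Dvoretzky--Kiefer--Wolfowitz / Vapnik--Chervonenkis inequality yields $\sup_s|\hat P(L_{\bar x_k}^\ell(s)|A_k)-P(L_{\bar x_k}^\ell(s)|A_k)|=O(\sqrt{\log n_k/n_k})$ with exponential concentration. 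The residual sliver $L_{\bar x_k}^\ell(t+L\sqrt d w_n)\setminus L_x^\ell(t)$ sits inside $\{y:t<p(y|x)\le t+2L\sqrt d w_n\}$; by A2(a) its $P(\cdot|x)$-measure is at most $c_2(2L\sqrt d w_n)^\gamma=O(w_n^\gamma)$ when $t$ lies in the $\epsilon_0$-window around $t_x^{(\alpha)}$, and for $t$ outside this window the sandwiching values $P(L_{\bar x_k}^\ell(t\pm L\sqrt d w_n)|x)$ are both close to $0$ or both close to $1$, so the KS bound at the two endpoints suffices without invoking A2(a). Translating the sliver estimate from $P(\cdot|x)$ to $P(\cdot|A_k)$ uses the $O(w_n)$ bias already controlled, giving $V_n^{(1)}(x)=O(\sqrt{\log n_k/n_k})+O(w_n^{\gamma_1})$.

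On the event from Lemma \ref{lem:local_sample_size} that $n_k\asymp nw_n^d$ uniformly in $k$, the choice $w_n\asymp r_n$ makes $\sqrt{\log n/(nr_n^d)}\le r_n$ by a direct exponent calculation from $r_n=(\log n/n)^{\beta/(\beta(d+2)+1)}$, and $w_n^{\gamma_1}=r_n^{\gamma_1}$, so $V_n^{(1)}+V_n^{(2)}=O(r_n^{\gamma_1})$ with the desired probability. A union bound over the $w_n^{-d}=\mathrm{poly}(n)$ bins costs at most a polynomial factor, absorbed by inflating $\lambda$ in the per-bin DKW constant. The main obstacle is the joint uniformity in $(x,t)$: the Lipschitz sandwich perturbs the threshold by $O(w_n)$, and A2(a) only controls such threshold perturbations near $t_x^{(\alpha)}$, so the case split for $t$ outside the $\epsilon_0$-window is the delicate step. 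Everything else closely parallels the density-estimation argument in Lemma \ref{lem:p_k_infty}.
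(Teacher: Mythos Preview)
Your plan matches the paper's proof in structure: split $V_n$ into an empirical-process term and a bias term, control the former by a VC/DKW bound on the nested family of lower level sets at a single reference point per bin, control the latter by A1(d) together with A2(b), and pass to all $x\in A_k$ via Lipschitz continuity. The paper executes the last step slightly differently: instead of your sandwich $L_{\bar x_k}^\ell(t-\delta)\subseteq L_x^\ell(t)\subseteq L_{\bar x_k}^\ell(t+\delta)$, it fixes one $x\in A_k$, proves the bound there, and then for any other $x'\in A_k$ writes a triangle inequality whose extra terms are controlled through $\mu\bigl(L_x^\ell(t)\triangle L_{x'}^\ell(t)\bigr)$ via the auxiliary Lemma~\ref{lem:Lip_local_measure}. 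Your sandwich has the technical advantage that you only ever need to bound the \emph{true} measure $P(\cdot\,|A_k)$ of the sliver, whereas the paper's first triangle term bounds the \emph{empirical} measure of a symmetric difference by $\|\hat p(\cdot|A_k)\|_\infty\,\mu(L_x^\ell\triangle L_{x'}^\ell)$, which is a little delicate since $\hat P$ is discrete.

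One caveat: your proposed case split for $t$ outside the $\epsilon_0$-window of A2(a) (``both sandwiching values close to $0$ or both close to $1$'') is not actually an argument---nothing in the assumptions forces $G_x$ to be flat away from $t_x^{(\alpha)}$. The paper does not confront this either: Lemma~\ref{lem:Lip_local_measure} simply applies the upper bound $G_{x'}(t+Lw_n)-G_{x'}(t-Lw_n)\le c_2(2Lw_n)^\gamma$ at every $t\ge t_0$, effectively reading A2(a) as holding on the full range. So on this point your proof is no less rigorous than the paper's; in practice one either strengthens A2(a) to hold for all $t\ge t_0$ or observes that Lemma~\ref{lem:generic_argument} only invokes $V_n$ at thresholds within $O(r_n)$ of $t_x^{(\alpha)}$.
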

\begin{proof} Consider a fixed
 $A_k$ and an $x\in A_k$.
Note that $\{L^\ell_x(t):t\ge t_0\}$ is a nested
     class of sets with VC dimension 2. By classical empirical process
     theory, for all $B>0$ we have
     \begin{equation}\label{eq:VC}
     \mathbb P\left(\sup_t\left|\hat P(L^\ell_x(t)|A_k)-P(L^\ell_x(t)|A_k)\right|\ge
     B\sqrt{\frac{\log n_k}{
        n_k}}\right)\le C_0 n_k^{-(B^2/32-2)},\end{equation}
        for some universal constant $C_0$.

     On the other hand
    \begin{align}\label{eq:V_n_local_lip}
      \left|P(L_x^\ell(t)|A_k)-P(L_x^\ell(t)|x)\right|
      &=\left|\int_{L^\ell_x(t)}(p(y|A_k)-p(y|x))dy\right|
      \nonumber\\
&\le\sqrt{d}Lw_n\mu(L_x(t))\nonumber\\
&\le \sqrt{d}Lw_n\mu(L_x(t_0))\nonumber\\
&\le CL\sqrt{d} w_n,
    \end{align}
where the constant $C$ is defined in Assumption
A3(b).

Note that on $E_0$ we have
$\sqrt{\log n_k/ n_k}=o(r_n)$ and hence
$\sqrt{\log n_k / n_k}\le r_n$ for $n$
large enough.

Consider any $x'\in A_k$.
\begin{align}\label{eq:local_G_lip}
    &\left|\hat P(L_{x'}^\ell(t)|A_k)-
    P(L_{x'}^\ell(t)|x')\right|\nonumber\\
    \le&
    \left|\hat P(L_{x'}^\ell(t)|A_k)-\hat P
    (L_x^\ell(t)|A_k)\right|+
    \left|\hat P(L_x^\ell(t)|A_k)-P(L_x^\ell(t)|x)\right|
    + \left|P(L_x^\ell(t)|x)-P(L_{x'}^\ell(t)|x')\right|
    \nonumber\\
    \le&  ||\hat p(\cdot|A_k)||_\infty\mu\left(
    L_x^\ell(t)\triangle L_{x'}^\ell(t)\right)
    +V_n(x) + |G_x(t)-G_{x'}(t)|\nonumber\\
    \le &||\hat
    p(\cdot|A_k)||_\infty\frac{c_2(2L)^\gamma}{t_0}w_n^\gamma
    +V_n(x) + C\sqrt{d}Lw_n+\frac{c_22^\gamma
    L^{\gamma+1}}{t_0}w_n^\gamma,
\end{align}
where the last step uses Lemma \ref{lem:Lip_local_measure}
to control $\mu\left(L_x^\ell(t)\triangle
L_{x'}^\ell(t)\right)$ and $G_x(t)-G_{x'}(t)$.

Lemma 15 implies that, except for a probability of $O(n^{-\lambda})$, $\sup_k||\hat
p(\cdot|A_k)||_\infty=L+o(1)$ with $L$ defined in A2(b).
Combining (\ref{eq:VC}), (\ref{eq:V_n_local_lip}), and
(\ref{eq:local_G_lip}), we have, for some constant
$\xi_{2,\lambda}$
$$\mathbb P\left(\sup_x V_n(x)\ge
\xi_{2,\lambda}r_n^{\gamma_1}\right)=O(n^{-\lambda}),$$
where $\gamma_1=\min(\gamma,1)$.

%
%
%
%
%
%
\end{proof}

\begin{lemma}\label{lem:Lip_local_measure}
  Under assumptions A1-A3,
  $$\sup_k\sup_{t\ge t_0, x,x'\in
  A_k}|G_x(t)-G_{x'}(t)|=O(w_n^{\gamma \wedge 1}).$$
\end{lemma}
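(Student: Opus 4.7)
I would fix an index $k$ and two points $x,x'\in A_k$, so that $\|x-x'\|\le\sqrt d\,w_n$; then Assumption A1(d) gives $\|p(\cdot|x)-p(\cdot|x')\|_\infty\le\epsilon$ with $\epsilon:=L\sqrt d\,w_n$. Using $G_x(t)=1-P_x(L_x(t))$ (valid up to measure-zero contour sets), my first step is the decomposition
\[
G_x(t)-G_{x'}(t)\;=\;\underbrace{\int_{L_x(t)}\!\bigl[p(y|x')-p(y|x)\bigr]dy}_{I_1(t)}\;+\;\underbrace{\bigl[P_{x'}(L_{x'}(t))-P_{x'}(L_x(t))\bigr]}_{I_2(t)},
\]
which separates a ``bulk'' contribution driven purely by the Lipschitz condition from a ``boundary'' contribution driven by the behavior of $p(\cdot|x)$ near level $t$.

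The bulk term $I_1(t)$ is immediate: $|I_1(t)|\le\epsilon\,\mu(L_x(t))$, and since $t\ge t_0$ implies $L_x(t)\subseteq L_x(t_0)$, Assumption A2(b) gives $\mu(L_x(t))\le C$, so $|I_1(t)|=O(w_n)$.

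For the boundary term I would first check that $L_x(t)\triangle L_{x'}(t)\subseteq S_t:=\{y:|p(y|x)-t|\le\epsilon\}$: if $y$ lies in the symmetric difference, then $t$ is trapped between $p(y|x)$ and $p(y|x')$, which differ by at most $\epsilon$. On $S_t$ we also have $p(y|x')\le t+2\epsilon$, so $|I_2(t)|\le(t+2\epsilon)\mu(S_t)$. A Markov-type estimate yields $\mu(S_t)\le(t-\epsilon)^{-1}[G_x(t+\epsilon)-G_x(t-\epsilon)]$, and for $t$ in the $(\epsilon_0-\epsilon)$-neighborhood of $t_x^{(\alpha)}$ Assumption A2(a) gives $G_x(t+\epsilon)-G_x(t-\epsilon)\le c_2(2\epsilon)^\gamma$. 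Taking $n$ so large that $\epsilon\le t_0/2$ bounds $(t+2\epsilon)/(t-\epsilon)$ by $4L/t_0$, and therefore $|I_2(t)|=O(\epsilon^\gamma)=O(w_n^\gamma)$ in this regime.

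The main obstacle will be extending the boundary bound uniformly to all $t\ge t_0$, since A2(a) only controls the shell mass in a fixed neighborhood of $t_x^{(\alpha)}$. I would dispatch the two ``far'' ranges $[t_0,t_x^{(\alpha)}-\epsilon_0]$ and $[t_x^{(\alpha)}+\epsilon_0,L]$ separately: in each, the level sets $L_x(t)$ lie in $L_x(t_0)$ and so have measure $\le C$ by A2(b), and one can telescope back to the endpoint $t_x^{(\alpha)}\pm\epsilon_0$ (where the boundary estimate above already applies) at the cost of an additional $O(w_n)$ error coming from the bulk-type integral $\int_{L_x(t)\setminus L_x(t')}[p(y|x')-p(y|x)]dy$ bounded via A1(d) and A2(b). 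Combining all three ranges yields $|G_x(t)-G_{x'}(t)|=O(w_n+w_n^\gamma)=O(w_n^{\gamma\wedge 1})$ uniformly in $t\ge t_0$ and $x,x'\in A_k$. Since the constants $L,C,c_2,\epsilon_0,t_0$ are all distribution-level and in particular independent of $k$, taking $\sup_k$ is immediate and yields the claim.
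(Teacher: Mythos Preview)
Your decomposition into a bulk term $I_1$ (bounded via A1(d) and A2(b) by $O(w_n)$) and a boundary term $I_2$ (bounded via the shell inclusion $L_x(t)\triangle L_{x'}(t)\subseteq\{|p(\cdot|x)-t|\le\epsilon\}$, a Markov inequality, and A2(a) by $O(w_n^\gamma)$) is precisely the paper's argument; the paper carries it out in one display and arrives at $C\sqrt d\,L w_n + c_2 2^\gamma L^{\gamma+1}t_0^{-1}w_n^\gamma$. For $t$ in the $\epsilon_0$-neighborhood of $t_x^{(\alpha)}$ the two proofs are essentially identical.

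You are right that A2(a) is stated only for $t$ within $\epsilon_0$ of $t_x^{(\alpha)}$; the paper's proof simply applies it at every $t\ge t_0$ without comment. Your telescoping repair for the far ranges, however, does not close the gap. Writing
\[
G_x(t)-G_{x'}(t)=\bigl[G_x(t)-G_x(t')\bigr]-\bigl[G_{x'}(t)-G_{x'}(t')\bigr]+\bigl[G_x(t')-G_{x'}(t')\bigr]
\]
with $t'$ the nearest good endpoint, the last bracket is $O(w_n^{\gamma\wedge1})$ and the difference of the first two reduces (for $t<t'$) to $P_{x'}(A')-P_x(A)$ with $A=\{t<p(\cdot|x)\le t'\}$, $A'=\{t<p(\cdot|x')\le t'\}$. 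The bulk piece $\int_A[p(\cdot|x')-p(\cdot|x)]\,dy$ is indeed $O(w_n)$, but the remaining boundary piece $P_{x'}(A\triangle A')$ requires bounding $\mu(A\triangle A')$, and $A\triangle A'$ is contained in the union of the $\epsilon$-shells at levels $t$ \emph{and} $t'$. The shell at $t'$ is handled by A2(a), but the shell at $t$ is exactly the original obstruction: you still need $\mu(\{|p(\cdot|x)-t|\le\epsilon\})=O(\epsilon^\gamma)$ at the far level, which A2(a) as stated does not supply. So telescoping shifts nothing. As written, neither your argument nor the paper's derives the full $\sup_{t\ge t_0}$ conclusion from the literal A2(a); one effectively needs the upper inequality in A2(a) to hold on all of $[t_0,L]$.
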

\begin{proof}
\begin{align}
  &L_x(t)\triangle L_{x'}(t)\nonumber\\
  =&
  \{y:p(y|x)>t,p(y|x')\le t\}\cup
  \{y:p(y|x)\le t, p(y|x')> t\}\nonumber\\
  =&\{y:t<p(y|x)\le t+Lw_n,p(y|x')\le t\}\cup
  \{y:t-Lw_n<p(y|x)\le t,p(y|x')>t\}\nonumber\\
  \subseteq &\{y:t-Lw_n< p(y|x)\le t+Lw_n\},
\end{align}
where the first step uses the fact that
$||p(\cdot|x)-p(\cdot|x')||_\infty\le L||x-x'||$
and the constant $L$ is from Assumption A2(c).
  \begin{align}
    &\left|G_x(t)-G_{x'}(t)\right|\nonumber\\
    \le&\left|P(L_x^\ell(t)|x)-P(L_x^\ell(t)|x')\right|
    +\left|P(L_x^\ell(t)|x')-P(L_{x'}^\ell(t)|x')\right|
    \nonumber\\
    =&\left|P(L_x(t)|x)-P(L_x(t)|x')\right|
    +\left|P(L_x^\ell(t)|x')-P(L_{x'}^\ell(t)|x')\right|
    \nonumber\\
    \le& \mu(L_x (t))||p(\cdot|x)-p(\cdot|x')||_\infty
    +||p(\cdot|x')||_\infty\mu(L_x^\ell(t)\triangle
     L_{x'}^\ell(t))\nonumber\\
     \le &
     C\sqrt{d}Lw_n+L\frac{G_{x'}(t+Lw_n)-G_{x'}(t-Lw_n)}{t_0}
     \nonumber\\
     \le &
  C\sqrt{d}Lw_n+\frac{c_22^\gamma
  L^{\gamma+1}}{t_0}w_n^\gamma,
  \end{align}
where the constant $L$ is from Assumption A2 and
$(c_2,C,\gamma)$ are defined in Assumption A3.
\end{proof}

We complete the argument using \cite{CadrePP09} and \cite{LeiRW11}.

\begin{lemma}
\label{lem:generic_argument}
Fix $\alpha>0$ and $t_0>0$.
Suppose $p$ is a density function satisfying Assumption
A3(a).
Let $\hat p$ be an estimated density such that $||\hat
p-p||_\infty \le \nu_1$,
and $\hat P$ be a probability measure satisfying
$\sup_{t\ge t_0}\left|\hat
P(L^\ell(t))-P(L^\ell(t))\right|\le \nu_2$.
Define
$\hat t^{(\alpha)}=\inf\{t\ge 0:\hat P(\hat L^\ell(t))\ge
\alpha\}$.
If $\nu_1,\nu_2$ are small enough such that
$\nu_1+c_1^{-1/\gamma}\nu_2^{1/\gamma}\le
t^{(\alpha)}-t_0$
and $c_1^{-1/\gamma}\nu_2^{1/\gamma}\le \epsilon_0$ (where $c_1$, $\gamma$ are constants in Assumption A3(a)),
then
 \begin{equation}\label{eq:bound_t_hat}
 \left|\hat t^{(\alpha)}-t^{(\alpha)}\right|\le \nu_1+c_1^{-1/\gamma}\nu_2^{1/\gamma}.
 \end{equation}

Moreover, for any $\tilde t^{(\alpha)}$ such that
$|\tilde t^{(\alpha)}-\hat t^{(\alpha)}|\le \nu_3$, if
$2\nu_1+c_1^{-1/\gamma}\nu_2^{1/\gamma}+\nu_3\le \epsilon_0$, then there exist
constants $\xi_1$, $\xi_2$ and $\xi_3$ such that
$$
\mu\left(\hat L(\tilde t^{(\alpha)})\triangle L(t^{(\alpha)})\right)\le
\xi_1\nu_1^{\gamma}+\xi_2\nu_2 + \xi_3\nu_3^\gamma.
$$
\end{lemma}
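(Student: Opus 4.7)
The plan is to establish the two claims in sequence, with the $\gamma$-exponent margin condition A3(a) serving as the central analytic tool.

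For the bound $|\hat t^{(\alpha)}-t^{(\alpha)}|$, I would assemble a two-sided sandwich on $\hat P(\hat L^\ell(t))$. The sup-norm bound $||\hat p-p||_\infty\le\nu_1$ forces $L^\ell(t-\nu_1)\subseteq\hat L^\ell(t)\subseteq L^\ell(t+\nu_1)$, hence $\hat P(L^\ell(t-\nu_1))\le\hat P(\hat L^\ell(t))\le\hat P(L^\ell(t+\nu_1))$; combining with the uniform closeness $|\hat P(L^\ell(s))-G(s)|\le\nu_2$ (valid for $s\ge t_0$) yields
\[
G(t-\nu_1)-\nu_2\;\le\;\hat P(\hat L^\ell(t))\;\le\;G(t+\nu_1)+\nu_2.
\]
I would then invert $G$ near $t^{(\alpha)}$ via A3(a): taking $t=t^{(\alpha)}+\nu_1+c_1^{-1/\gamma}\nu_2^{1/\gamma}$ makes $G(t-\nu_1)\ge\alpha+\nu_2$, so the lower sandwich exceeds $\alpha$ and $\hat t^{(\alpha)}\le t$; the symmetric choice on the upper sandwich supplies the matching lower bound, giving (\ref{eq:bound_t_hat}). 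The explicit smallness conditions on $\nu_1,\nu_2$ are precisely what is needed to keep the relevant $t$-values inside $[t_0,t^{(\alpha)}+\epsilon_0]$, where both A3(a) and the hypothesis on $\hat P$ are applicable.

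Next, I would identify the symmetric difference as a thin strip around the level $t^{(\alpha)}$. Setting $\delta:=2\nu_1+c_1^{-1/\gamma}\nu_2^{1/\gamma}+\nu_3$, the triangle inequality combined with Step~1 gives $|\tilde t^{(\alpha)}-t^{(\alpha)}|\le\delta-\nu_1$. For $y\in\hat L(\tilde t^{(\alpha)})\setminus L(t^{(\alpha)})$ one has $p(y)<t^{(\alpha)}$ and $p(y)\ge\hat p(y)-\nu_1\ge\tilde t^{(\alpha)}-\nu_1\ge t^{(\alpha)}-\delta$; the reverse difference is handled symmetrically. Thus the symmetric difference is contained in $\{y:|p(y)-t^{(\alpha)}|\le\delta\}$. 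By A3(a) (applicable since $\delta\le\epsilon_0$), this strip has $P$-mass at most $G(t^{(\alpha)}+\delta)-G(t^{(\alpha)}-\delta)\le 2c_2\delta^\gamma$; since on the strip $p(y)\ge t^{(\alpha)}-\delta\ge t_0>0$, dividing by the density floor yields $\mu(\hat L(\tilde t^{(\alpha)})\triangle L(t^{(\alpha)}))\le 2c_2\delta^\gamma/t_0$. Applying $(a+b+c)^\gamma\le 3^{(\gamma-1)_+}(a^\gamma+b^\gamma+c^\gamma)$ together with the identity $(c_1^{-1/\gamma}\nu_2^{1/\gamma})^\gamma=c_1^{-1}\nu_2$ then splits $\delta^\gamma$ cleanly into terms proportional to $\nu_1^\gamma$, $\nu_2$, and $\nu_3^\gamma$, producing constants $\xi_i$ depending only on $(c_1,c_2,t_0,\gamma)$.

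The main obstacle is the final Lebesgue-versus-probability conversion. The margin condition A3(a) only controls $P$-mass near $t^{(\alpha)}$, not Lebesgue measure, so without the density floor $p\ge t_0$ on the strip one has no mechanism to translate the probability bound into a measure bound. The quantitative hypotheses $c_1^{-1/\gamma}\nu_2^{1/\gamma}\le\epsilon_0$, $\nu_1+c_1^{-1/\gamma}\nu_2^{1/\gamma}\le t^{(\alpha)}-t_0$, and $2\nu_1+c_1^{-1/\gamma}\nu_2^{1/\gamma}+\nu_3\le\epsilon_0$ are tailored precisely to guarantee simultaneously that $\delta\le\epsilon_0$ (making A3(a) applicable) and $t^{(\alpha)}-\delta\ge t_0>0$ (making the floor conversion valid), so these constraints are essential rather than cosmetic.
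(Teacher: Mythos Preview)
Your proposal is correct and follows essentially the same argument as the paper: the same sup-norm sandwich $L^\ell(t-\nu_1)\subseteq\hat L^\ell(t)\subseteq L^\ell(t+\nu_1)$ combined with the $\nu_2$-bound and inversion of $G$ via A3(a) for the first claim, and the same containment of the symmetric difference in the strip $\{y:|p(y)-t^{(\alpha)}|\le\delta\}$ followed by the density-floor conversion from $P$-mass to Lebesgue measure for the second. Your write-up is in fact slightly more explicit than the paper's about the floor conversion $\mu(\cdot)\le P(\cdot)/t_0$ and about splitting $\delta^\gamma$ into the three terms via $(a+b+c)^\gamma\lesssim a^\gamma+b^\gamma+c^\gamma$.
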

\begin{proof}
The proof follows essentially from \citet{LeiRW11},
which is a modified version of the argument
used in \citet{CadrePP09}.

For $t\ge t_0$, let $\hat L^\ell(t)=\{y:\hat p(y)\le t\}$. By the assumptions in the lemma we have
\begin{eqnarray*}
&&  L^\ell(t-\nu_1)\subseteq \hat L^\ell(t)\subseteq L^\ell(t+\nu_1)\\
&  \Rightarrow & \hat P(L^\ell(t-\nu_1))\le \hat P(\hat L^\ell(t))\le\hat   P(L^\ell(t+\nu_1))\\
&  \Rightarrow & P(L^\ell(t-\nu_1))-\nu_2\le \hat P(\hat L^\ell(t))   \le P(L^\ell(t+\nu_1))+\nu_2.
\end{eqnarray*}
Hence,
$$
\hat P(\hat   L^\ell(t^{(\alpha)}-\nu_1-c_1^{-1/\gamma}\nu_2^{1/\gamma}))
\le P(L^\ell(t^{(\alpha)}-c_1^{-1/\gamma}
\nu_2^{1/\gamma}))+\nu_2\nonumber \le  \alpha,
$$
where the last step uses Assumption A3(a).

Therefore, we must have $\hat t^{(\alpha)}\ge
t^{(\alpha)}-\nu_1-c_1^{-1/\gamma}\nu_2^{1/\gamma}$.  A
similar argument gives $\hat t^{(\alpha)}
\le t^{(\alpha)}+\nu_1+c_1^{-1/\gamma}\nu_2^{1/\gamma}.$
This proves the first part.

For the second part,
note that
$$\hat L(\tilde t^{(\alpha)})\triangle L(t^{(\alpha)})
=\{y:\hat p(y)\ge\tilde t^{(\alpha)}, p(y)<
t^{(\alpha)}\}\cup
\{y:\hat p(y)<\tilde t^{(\alpha)}, p(y)\ge
t^{(\alpha)}\}.
$$
By the assumption on $\tilde t^{(\alpha)}$
and the first result,
$$\{\hat p(y)\ge \tilde t^{(\alpha)}\}\subseteq
\{p(y)\ge
t^{(\alpha)}-2\nu_1-c_1^{-1/\gamma}\nu_2^{1/\gamma}
-\nu_3\},$$
$$\{\hat p(y)< \tilde t^{(\alpha)}\}\subseteq
\{p(y)<
t^{(\alpha)}+2\nu_1+c_1^{-1/\gamma}\nu_2^{1/\gamma}
+\nu_3\}.$$

As a result,
\begin{eqnarray*}
\mu\left(\hat L(\tilde t^{(\alpha)})\triangle  L(t^{(\alpha)})\right) & \leq &
\mu\left(\left\{y:|p(y)-t^{(\alpha)}|\le 2\nu_1+c_1^{-1/\gamma}\nu_2^{1/\gamma}+\nu_3\right\}\right)\\
&\le & t_0^{-1}c_2 (4\nu_1+2c_1^{-1/\gamma}\nu_2^{1/\gamma}+2\nu_3)^\gamma \leq
\xi_1\nu_1^\gamma +\xi_2\nu_2+\xi_3\nu_3^\gamma,
\end{eqnarray*}
where $(\xi_1,\xi_2,\xi_3)$ are functions of
$(t_0, c_1,c_2,\gamma)$.
\end{proof}

\begin{proof}
  [Proof of Theorem \ref{thm:main_consist}]
  The proof is based on a direct application of Lemma
  \ref{lem:generic_argument} to
  the density $p(\cdot|x)$ and the empirical measure
  $\hat P(\cdot|A_k)$ and
  estimated density function $\hat p(\cdot|A_k)$.

Here we use $\hat L$ for upper level sets of $\hat
p(\cdot|A_k)$ and omit the dependence on $k$.

Conditioning on $(i_1,...,i_{n_k})$,
then one can show that
the local conformal prediction set $\hat
C^{(\alpha)}(x)$
is ``sandwiched'' by two estimated level sets:
$$\hat L\left(\hat
p\left(X_{(i_\alpha)},Y_{(i_\alpha)}|A_k\right)\right)
\subseteq \hat C^{(\alpha)}(x)
\subseteq
\hat L\left(\hat p(X_{(i_\alpha)},Y_{(i_\alpha)}|A_k)-(n_kh_n)^{-1}\psi_K\right),$$
where $\psi_K=\sup_{x,x'}|K(x)-K(x')|$.
So the asymptotic properties of $\hat C^{(\alpha)}(x)$ can be
obtained by those of the sandwiching sets.

Recall that $(X_{(i_\alpha)},Y_{(i_\alpha)})$
is the element of $\{(X_{i_1},Y_{i_1}),...,
(X_{i_{n_k}},Y_{i_{n_k}})\}$ such that
$\hat p(Y_{(i_\alpha)}|A_k)$ ranks
$\lfloor n_k\alpha\rfloor$ in ascending order
among all
$\hat p(Y_{i_{j}}|A_k)$, $1\le j\le n_k$.
 Let $\hat t^{(\alpha)}= \hat
 p(X_{(i_\alpha)},Y_{(i_\alpha)})$.  It is easy to check
that
$$\hat t^{(\alpha)}=\inf\left\{t\ge 0:\hat P\left(\hat L^\ell(t)\big|A_k\right)
\ge
\alpha\right\}.$$
Consider event
$$E_1=\left\{\sup_x R_n(x)\le \xi_{1,\lambda}r_n,~\sup_x
V_n(x)\le \xi_{2,\lambda}r_n^{\gamma_1}\right\},$$
where $\xi_1$ and $\xi_2$ are defined as
in the statement of Corollary \ref{cor:R_n}
and Lemma \ref{lem:V_n}. We have
$\mathbb P(E_1^c)=O(n^{-\lambda})$.

Let $\nu_1=\xi_{1,\lambda}r_n$, $\nu_2=\xi_{2,\lambda}
r_n^{\gamma_1}$. Note that $r_n\rightarrow 0$ as
$n\rightarrow \infty$,
so for $n$
large enough, we have $\nu_1$ and $\nu_2$ satisfying the
requirements in
Lemma \ref{lem:generic_argument}.  Let $\nu_3=0$ in this
case,
then we have, for some constants $\xi_{1,\lambda}'$,
$\xi_{2,\lambda}'$, that
$$\mathbb P\left(\sup_x\mu\left(\hat L(\hat t^{(\alpha)})\triangle
L_x(t^{(\alpha)})\right)\ge \xi_{1,\lambda}' r_n^{\gamma}+\xi_{2,\lambda}'
 r_n^{\gamma_1}\right)=O(n^{-\lambda}),$$
which is equivalent to
$$\mathbb P\left(\sup_x\mu\left(\hat L(\hat t^{(\alpha)})\triangle
L_x(t^{(\alpha)})\right)\ge \xi_{\lambda}'
 r_n^{\gamma_1}\right)=O(n^{-\lambda}),$$
for some constant $\xi_\lambda'$ independent of
$n$.

Now let $\tilde t^{(\alpha)}=\hat t^{(\alpha)}-(n_kh_n)^{-1}\psi_K$.  Applying
Lemma \ref{lem:generic_argument} with $\nu_{3}=\nu_{3,n}=(n_kh_n)^{-1}\psi_K$, we obtain,
for some constants $\xi_{j,\lambda}''$, $j=1,2,3$,
$$\mathbb P\left(\mu\left(\hat L(\hat t^{(\alpha)})\triangle
L_x(t^{(\alpha)})\right)\ge \xi_{1,\lambda}'' r_n^{\gamma}+\xi_{2,\lambda}''
 r_n^{\gamma_1} +
\xi_{3,\lambda}'' \nu_{3,n}^\gamma\right)=O(n^{-\lambda}).$$
Note that on $E_0$, $\nu_{3,n}=o(r_n)$, so the above inequality reduces to
$$\mathbb P\left(\mu\left(\hat L(\hat t^{(\alpha)})\triangle
L_x(t^{(\alpha)})\right)\ge \xi_{\lambda}''
 r_n^{\gamma_1}\right)=O(n^{-\lambda}).$$

The conclusion of Theorem \ref{thm:main_consist} follows from the sandwiching property:
$$\mu\left(\hat C^{(\alpha)}(x)\triangle L_x(t_x^{(\alpha)})\right)
\le
\mu\left(\hat L\left(\hat t^{(\alpha)}\right)\triangle
L_x(t_x^{(\alpha)})\right)
+\mu\left(\hat L\left(\tilde t^{(\alpha)}\right)\triangle
L_x(t_x^{(\alpha)})\right),$$
where $\hat t^{(\alpha)}= \hat p(X_{(i_\alpha)},Y_{(i_\alpha)})$ and
$\tilde t^{(\alpha)}=\hat t^{(\alpha)}-(n_kh_n)^{-1}\psi_K$.
\end{proof}

\begin{lemma}
  [Lower bound on local sample
  size]\label{lem:local_sample_size}
Under assumption A1:
  $$\mathbb P\left(\forall k: b_1nw_n^d/2\le n_k\le 3b_2nw_n^d/2\right)
  \ge1- C_3w_n^{-d}e^{-C_4nw_n^d},$$
where $C_3=2\left[{\sf Diam}({\rm supp}(P_X))\right]^d$
and $C_4=b_1^2/(8b_2+4b_1/3)$ with $b_1$, $b_2$ defined
in Assumption A1(a).
\end{lemma}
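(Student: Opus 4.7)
The plan is to treat each $n_k$ as a Binomial random variable and then combine a Bernstein tail bound with a union bound over bins. Concretely, write $n_k = \sum_{i=1}^n \mathbf 1(X_i\in A_k) \sim {\sf Binomial}(n, q_k)$ where $q_k = P_X(A_k) = \int_{A_k} p_X(x)\, dx$. Assumption A1(a) immediately gives $b_1 w_n^d \le q_k \le b_2 w_n^d$, so $b_1 n w_n^d \le \mathbb E n_k \le b_2 n w_n^d$.

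Next I would observe that both bad events fit inside a single two-sided deviation. If $n_k < b_1 n w_n^d/2$, then $\mathbb E n_k - n_k > b_1 n w_n^d/2$; if $n_k > 3 b_2 n w_n^d/2$, then $n_k - \mathbb E n_k > b_2 n w_n^d/2 \ge b_1 n w_n^d/2$. Hence in either case $|n_k - \mathbb E n_k| > t$ for $t := b_1 n w_n^d/2$. Apply Bernstein's inequality with variance bound $\sigma^2 \le n q_k \le b_2 n w_n^d$ and range $1$ to obtain
\begin{equation*}
\mathbb P(|n_k-\mathbb E n_k|>t) \le 2\exp\left(-\frac{t^2/2}{\sigma^2 + t/3}\right)
\le 2\exp\left(-\frac{b_1^2 n w_n^d}{8 b_2 + 4b_1/3}\right)
= 2 e^{-C_4 n w_n^d},
\end{equation*}
after simplification, which matches the claimed $C_4$.

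Finally I would take a union bound over all bins. The number of bins $A_k$ that meet ${\rm supp}(P_X)$ is at most $[{\sf Diam}({\rm supp}(P_X))]^d\, w_n^{-d}$, so
\begin{equation*}
\mathbb P\bigl(\exists k:\ n_k \notin [b_1 n w_n^d/2,\ 3b_2 n w_n^d/2]\bigr)
\le 2 [{\sf Diam}({\rm supp}(P_X))]^d\, w_n^{-d}\, e^{-C_4 n w_n^d}
= C_3 w_n^{-d} e^{-C_4 n w_n^d},
\end{equation*}
as required.

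There is no real obstacle beyond careful bookkeeping of constants; the only mildly delicate step is making sure the same threshold $t = b_1 n w_n^d/2$ covers both the lower and upper tails simultaneously, which is what forces the asymmetric-looking window $[b_1/2,\, 3b_2/2]$ in the statement and produces the slightly awkward constant $C_4 = b_1^2/(8b_2 + 4b_1/3)$ from the Bernstein denominator.
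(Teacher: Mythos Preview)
The proposal is correct and takes essentially the same approach as the paper: apply Bernstein's inequality to each $n_k$ with the deviation threshold $t = b_1 n w_n^d/2$, then take a union bound over the bins. Your version is in fact more explicit than the paper's own proof, which simply states the Bernstein bound and says ``the result follows by taking $t=c_1nw_n^d/2$ and union bound'' without working out the constants.
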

\begin{proof}
 Let $p_k=P_X(A_k)$. Use Bernstein's inequality,
  for each $k$,
  \begin{align*}
    \mathbb P\left(|n_k-np_k|\ge t\right)\le \exp\left(
  -\frac{t^2/2}{np_k(1-p_k)+t/3}\right).
  \end{align*}
  The result follows by taking $t=c_1nw_n^d/2$
   and union bound.
\end{proof}

\subsection{Proof of Theorem \ref{thm:lowerbound}}
\label{subsec::lower-proof}
In the following proof we focus on
the rate and ignore the tuning on
constants.
The proof uses Generalized Fano's Lemma and
the construction
follows these key steps.
\begin{enumerate}
\item Let the marginal of $X$ be uniform on $[0,1]^d$.
Divide $[0,1]^d$ into cubes of size $w>0$.
\item Choose a density function $p_0(y)$ such that:
\begin{enumerate}
\item $p_0(y)$ is symmetric and H\"{o}lder
smooth of order $\beta$.
\item There exists $y_0<0$ and
$\delta>0$, such that $p_0'(y)=1$ for all
$y\in (y_0-\delta,y_0+\delta)$.
\end{enumerate}
\item For $x\in A_j$, let $x_j$ be
the center of $A_j$. Define conditional density:
$$
p(y|x)=p(y,x-x_j)=p_0(y)+h(x-x_j)
K\left(\frac{y-y_0}{h^{\frac{1}{\beta}}(x-x_j)}\right)- h(x-x_j)
K\left(\frac{y+y_0}{h^{\frac{1}{\beta}}(x-x_j)}\right),
$$
where $h(x)$ is a function defined on $R^d$ with
support on $[-w/2,w/2]^d$,
attaining its maximum at $0$, and satisfying
$||h'||_\infty\le M<\infty$,
$h'(x)=0$ for $||x||_\infty\ge w/2$.
In particular, take
$h(x)=w\eta(2x/w),$ where $\eta(x)$
is a $d$-dimensional kernel function supported on
$[-1,1]^d$.
It is easy to verify
that the following conditions hold:
\begin{enumerate}
\item $p(\cdot|x)$ is a density function for all $x$.
\item
$p(y|x)$ is H\"{o}lder smooth of order $\beta$.
This can be verified
by noting that both $p_0$  and
$h(x-x_j)
K\left(\frac{y-y_0}
{h^{\frac{1}{\beta}}(x-x_j)}\right)$ are
H\"{o}lder
smooth of order $\beta$.
\item $|p(y|x)-p(y|x')|\le L ||x-x'||$.
This can be verified by noting
that
\begin{align*}
&\left|\frac{\partial }{\partial x}p(y|x)\right|\\
=& \left|h'(x)K\left(\frac{y-y_0}{h^{1/\beta}(x-x_j)}\right)
-h(x)K'\left(\frac{y-y_0}{h^{1/\beta}(x-x_j)}\right)
\frac{1}{\beta}h^{-\frac{1}{\beta}-1}(x-x_j)(y-y_0)\right|
\\
\le& ||h'||_\infty||K||_\infty+||h'||_\infty||K'||_\infty \left|(y-y_0) h^{-\frac{1}{\beta}}(x-x_j)\right|\\
\le&||h'||_\infty||K||_\infty+||h'||_\infty||K'||_\infty
\end{align*}
\end{enumerate}
\item For $j=1,...,w^{-d}$, let $P_j$ be the distribution of $(X,Y)$
such that
\begin{enumerate}
\item $P_X$ is uniform.
\item $p_j(y|X=x)=p_0(y)$ for $x\notin A_j$.
\item $p_j(y|X=x)=p(y|x)$ for $x\in A_j$.
 \end{enumerate}
We can verify that the Lipschitz condition
$|p(y|x)-p(y|x')|\le L ||x-x'||$
still
holds if we require $h'=0$ on the border of
the histogram cube.
\item (Pairwise separation) For $i\neq j$,
The conditional density level sets at $p_0(y_0)$ differ at least
$ch$ for some constant $c$ (Consider $p_j(y|X=x_j)$ and $p_i(y|X=x_j)$ and note that
they corresponds to the same level $\alpha$
as prediction bands).
\item (K-L divergence) Let $h=h(0)$. Condition (b) in step 2 implies that
there exists a constant $c>0$ such that
$\inf_{y:|y-y_0|\le h^{1/\beta}}p_0(y)\ge
c$ for $h$ small enough. For any $i\neq j$,
\begin{align*}
&\int\log \frac{p_0(y)}{p(y|x_j)}p_0(y)dy\\
=&-\int_{y_0-h^{1/\beta}}^{y_0+h^{1/\beta}}\log \left(
1+\frac{hK((y-y_0)/h^{1/\beta})}{p_0(y)}\right)p_0(y)dy\\
 &\qquad-\int_{-y_0-h^{1/\beta}}^{-y_0+h^{1/\beta}}\log
\left(1-\frac{hK((y+y_0)/h^{1/\beta})}{p_0(y)}\right)p_0(y)dy\\
\le &
-\int_{y_0-h^{1/\beta}}^{y_0+h^{1/\beta}}
\left(\frac{hK((y-y_0)/h^{1/\beta})}{p_0(y)}-
\frac{h^2K^2((y-y_0)/h^{1/\beta})}{p_0^2(y)}\right)p(y)dy \\
&\qquad
+\int_{-y_0-h^{1/\beta}}^{-y_0+h^{1/\beta}}
\left(\frac{hK((y+y_0)/h^{1/\beta})}{p_0(y)}+
\frac{h^2K^2((y+y_0)/h^{1/\beta})}{p_0^2(y)}\right)p(y)dy\\
\le &\frac{2}{c}h^{2+\frac{1}{\beta}}\int_{-1}^1 K^2(u)du
=C h^{2+\frac{1}{\beta}}.
\end{align*}
As a result
$$KL(P_i||P_j)\le C h^{2+\frac{1}{\beta}} w^d.$$
\item Using the generalized Fano's lemma
(see also Tsybakov (2009, Chapter 2)):
\begin{align}
  \inf_{\hat C}\sup_{P}\mathbb E_P \sup_x \mu(\hat C(x)\triangle C(x))
  \ge  \frac{h}{2}\left(1-\frac{C n h^{2+\frac{1}{\beta}} w^d
  +\log 2}{-d\log w}\right),
\end{align}
where the supremum is over all $P$ such that $p(y|x)$ is Lipschtiz in $x$
in sup-norm sense, and $p(y|x)$ is H\"{o}lder smooth of order $\beta$.

Choosing $h=w=c (\log n / n )^{1/(d+2+1/\beta)}$ with constant $c$
small enough, we have
$$\inf_{\hat C}\sup_{P}\mathbb E_P \sup_x \mu(\hat C(x)\triangle C(x))
\ge c' \left(\frac{\log n }{n}\right)^{\frac{1}{d+2+\frac{1}{\beta}}}.$$

Note that the choice $h\asymp w$ is required by the condition
$$h K(0)=p(y_0|x_j)-p(y_0|x_j+w/2)\le L w.$$
\end{enumerate}

%

\bibliographystyle{apa-good}

\end{document}